\newcommand{\p}{\partial}
\def\hatr{{\hat r}}
\def\hatR{{\hat R}}
\def\hatB{{\hat B}}
\def\m1{\mathbbm{1}}
\def\calR{{\mathcal R}}
\def\calF{{\mathcal F}}
\def\QM{{\mathbb Q}}
\def\Xs{{\mathbf{X}_s}}
\def\Xsm{{\mathbf{X}_{s^-}}}
\begin{document}

\begin{frontmatter}

\title{Influence of jump-at-default in IR and FX on Quanto CDS prices}

\author[nyu]{A.~Itkin\corref{cor1}}
\ead{aitkin@nyu.edu}

\author[uppsala]{V. Shcherbakov}
\ead{victor.shcherbakov@it.uu.se}

\author[hsbs]{A. Veygman}
\ead{veygmana@gmail.com}

\address[nyu]{Tandon School of Engineering, New York University, \\
12 Metro Tech Center, RH 517E, Brooklyn NY 11201, USA}

\address[uppsala]{Department of Information Technology, Division of Scientific Computing, \\
Box 337, 751 05 Uppsala, Sweden}

\address[hsbs]{HSBS, New York, USA}

\begin{abstract}
We propose a new model for pricing Quanto CDS and risky bonds. The model operates with four stochastic factors, namely: hazard rate, foreign exchange rate, domestic interest rate, and foreign interest rate, and also allows for jumps-at-default in the FX and foreign interest rates. Corresponding systems of PDEs are derived similar to how this is done in \cite{BieleckiPDE2005}. A localized version of the RBF partition of unity method is used to solve these 4D PDEs. The results of our numerical experiments presented in the paper qualitatively explain the discrepancies observed in the marked values of CDS spreads traded in domestic and foreign economies.
\end{abstract}

\begin{keyword}
Quanto Credit Default Swaps, Reduced Form Models, jump-at-default, stochastic interest rates, Radial Basis Function method.

\JEL C51, C63, G12, G13
\end{keyword}

\end{frontmatter}

\section{Introduction}
Quanto CDS is a credit default swap (CDS) with a special feature that the swap premium payments, and/or the cashflows in the case of default, are done in a different currency to that of the reference asset. A typical example would be a CDS that has its reference as a dollar-denominated bond for which the premium of the swap is payable in euros. And  in case of default the payment equals the recovery rate on the dollar bond payable in euros. In other words, this CDS is written on a dollar bond, while its premium is payable in euros. These contracts are widely used to hedge holdings in bonds or bank loans that are denominated in a foreign currency (other than the investor’s home currency).

As mentioned in \cite{Citi2010},  this product enables investors to take views on joint spread and FX moves with value a function of spread, the FX rate and FX volatility. Given the increased correlation between FX moves and credit spreads, interest in this product has increased recently, although like recovery swaps, it is still rather a niche market.

A Quanto CDS is quoted as a spread between the standard CDS and that of a different currency, and they are available for different maturities. For instance, one can observe that CDS on European sovereigns are usually traded in US dollars. That is because in case of default a euro-denominated credit protection would significantly drop down reflecting the default of the corresponding economy. So, the term structure of Quanto CDS  tells us how financial markets view the likelihood of a foreign default and associated currency devaluations at different horizons, see e.g., discussion in \cite{ACS2017} and references therein.

In Fig.~\ref{histSpread} historical time series of some European 5Y sovereign CDS traded in USD are presented for the period from 2006 to 2015. It can be seen that these spreads reach their maximum around 2011, and then drop down by factors 2-5 to their current level. However, since high levels of the spreads have been recorded, later in this paper when choosing test parameters of our numerical experiments we will look at the cases corresponding just to the period of raised spreads around 2011.

\begin{figure}[!t]
\centering
\includegraphics[width=\textwidth]{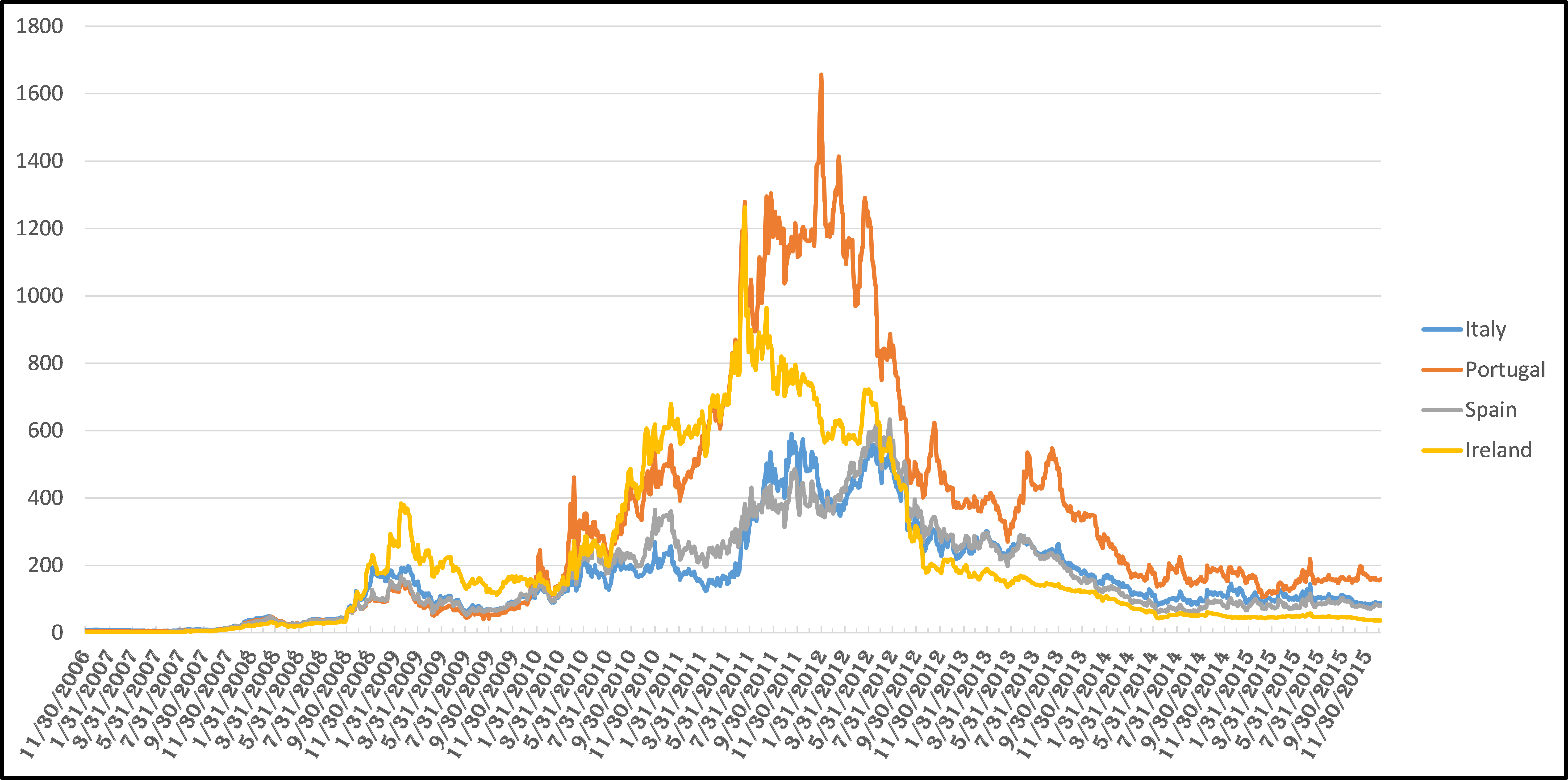}
\caption{Historical time-series of some European 5Y sovereign CDS traded in USD (Markit).}
\label{histSpread}
\end{figure}

As far as the value of the Quanto CDS spread is concerned, there are various data in the literature. For instance,  in \cite{ACS2017} the term structure of spreads, defined as the difference between the USD and EUR denominated CDS spreads, is presented for six Eurozone countries: Germany , Belgium, France, Ireland, Italy, Portugal and for maturities 3, 5, 7, 10, and 15 years relative to the 1 year Quanto spread. This difference could reach 30 bps at the time horizon 15 years (France, Ireland). In \cite{Simon2015} the 5 years Quanto CDS spreads are presented for Germany, Italy and France over the period from 2004 to 2013, which, e.g., for Italy could reach 500 bps in 2012. The results presented in \cite{Brigo} indicate a significant basis across domestic and foreign CDS quotes. For instance, for Italy, a USD CDS spread quote of 440 bps can translate into a EUR quote of 350 bps in the middle of the Euro-debt crisis in the first week of May 2012. More recently, from June 2013, the basis spreads between the EUR quotes and the USD quotes are in the range of around 40 bps.

Quanto effects drew a lot of attention on a modelling side. Various aspects of the problem were under investigation including the relationship between sovereign credit and currency risks, the pricing of sovereign CDS, the impact of contagion on credit risk, see survey in \cite{ACS2017} and references therein. But in this paper our particular attention will be directed to pricing Quanto CDS, or, more rigorously, to determining and testing an appropriate framework that provides a reasonable explanation of these effects from a mathematical finance point of view. Our approach is close to that in \cite{Brigo} where a model of Quanto CDS is built based on the reduced form model for credit risk. Within this setting the default time is modeled as a Cox process with explicit diffusion dynamics for default intensity/hazard rate and exponential jump to default, similar to the approach of \cite{ES2006, Mohammadi2006}. But what is more important, \cite{Brigo} introduce an explicit jump-at-default in the FX dynamics. Then they show that this provides a more effective way to model credit/FX dependency as the results of simulation are able to explain the observed basis spreads during the Euro-debt crisis. In contrast, taking into account the instantaneous correlation between the driving Brownian motions of the default intensity and the FX rate alone is not sufficient for doing so.

However, in \cite{Brigo} only deterministic domestic and foreign interest rates (IR) were considered. While it could be important to extend this approach by relaxing this assumption and letting the rates be stochastic. It would be more important to account not just for the jump-at-default in the FX rate, but also for a simultaneous jump-at-default in the interest rate of the defaulted country. Relevant data on the subject could be found, e.g., in \cite{Catao2015}. This investigation shows that the interest rate premium on past default has been underestimated. This is partly due to narrower credit history indicators and, crucially, to the narrower data coverage of previous studies. Once this correction is made for these problems, a sizeable and persistent default premium emerges, and one which rises on the duration of the default. This means that the longer a country stays in default the higher premium it will pay once it resumes borrowing from private capital markets.

Another example is given in \cite{Katselas2010}. He provides a plot of the overnight interbank cash rate as quoted by the Reserve Bank of Australia for the period starting on 4 January 2000 and finishing on 31 December 2009. This rate serves as an approximation to the risk-free short rate applicable to borrowing/lending in Australia, and the plot indicates that not only are jumps evident in the short rate, but that a pure jump process may act as a suitable model for short rates. This observation prompted, e.g.,
\cite{Borovkov2003}, to consider using a marked Poisson point process to model the short rate as a pure jump process.

Therefore, in this paper we extend the framework of \cite{Brigo} by introducing stochastic interest rates and account for jump-at-default in both FX and foreign (defaulted) interest rates. Our goal is to compare contribution of both jumps into the value of Quanto CDS spread. As this problem has four stochastic drivers, plus time, we show that the corresponding CDS price solves a four-dimensional partial differential equation. It is well-known that this dimensionality is such that finite-difference method already immensely suffer from the curse of dimensionality, while using Monte Carlo methods is too computationally expensive. Therefore, here we used another method, namely, a radial basis function (RBF) method, which has already demonstrated its efficiency when solving various problems of intermediate ($10 > d > 3$) dimensionality including those in mathematical finance, see, e.g.,  \cite{YCHon3,Fasshauer2,Pettersson}, thanks to its high order convergence. The latter allows for obtaining a high resolution scheme using just a few discretization nodes. In particular, in this paper a localized version of the RBF method is used. It is based on the partition of unity method (or RBF-PUM). The partition of unity was originally introduced by \cite{Melenk} for finite element methods, and later adapted for the RBF methods by several authors, \cite{Safdari,Shcherbakov}. This approach enables a significant reduction in the number of non-zero elements that remain in the coefficient matrix, hence, lowering the computational intensity required for solving the system.

The rest of the paper is organized as follows. In Section~\ref{model} we describe our model, and derive the main partial differential equation (PDE) for the risky bond price under this model. In Section~\ref{modelJumps} we extend this framework by adding jumps-at-default into the dynamics of the FX and foreign  (defaulted) interest rates. Again, the main PDE is derived for the risky bond (the detailed derivation is given in Appendix). The connection of this price with the prices of the Quanto CDS is established in Section~\ref{bond2cds}. In Section~\ref{numMethod} the RBF-PUM method is described in detail. In Section~\ref{experiments} we present numerical results of our experiments with this model and discussion of the observed effects. Finally, Section~\ref{sec:Conclusion} concludes the paper.

\section{Model} \label{model}

We begin describing our model by giving some useful definitions which are heavily utilized throughout the rest of the paper.

By the {\it domestic currency} or the {\it liquid currency} we denote the most liquidly traded currency among  all contractual currencies. In what follows this is the US dollar (USD).

The other contractual currency we denote as {\it contractual} or {\it foreign currency}. In this paper it can be both USD and EUR. The premium and protection leg payments are settled in this currency.

Since in this paper we focus on pricing credit default swap (CDS) contracts, it is assumed that their market quotes are available in both domestic and foreign currencies. Let us denote these prices as $\mathrm{CDS}_d$ and $\mathrm{CDS}_f$ respectively. If so, every price $\mathrm{CDS}_f$ expressed in the foreign currency can be translated into the corresponding price in the domestic currency if the exchange rate $Z_t$ for two currencies is provided by the market. In other words, the theoretical price of the CDS contract in the foreign currency would be $Z_t \mathrm{CDS}_d$. However, it is known that the market demonstrates a spread $\mathrm{CDS}_f - Z_t \mathrm{CDS}_d$ which could reach hundreds of bps, \cite{Brigo}. Hence, the availability of the market quotes on CDS contracts in both currencies together with the corresponding exchange rates allows one to capture these spreads.

We continue our description by considering a framework where all underlying stochastic processes do not experience a jump-at-default except the default process itself. So, this is similar to what is presented in \cite{Brigo} with an exception that the interest rates in our model are stochastic. This will then be generalized with the allowance for jumps-at-default in other processes in Section~\ref{modelJumps}.

\subsection{Simple jump-at-default framework}

Below we chose the risk neutral probability measure $\mathbb{Q}$ corresponding to the domestic (liquid) currency money market. Also, by $\mathbb{E}_t[\,\cdot\,]$ we denote the expectation conditioned on the information received by time $t$, i.e. $\mathbb{E}[\,\cdot\, | \mathcal{F}_t]$.

Consider two money markets: $B_t$ associated with the domestic currency (USD), and $\hatB_t$ associated with the foreign currency (EUR), where $t\geq 0$ is the calendar time. We assume that the dynamics of the two money market accounts are given by
\begin{align} \label{mmDyn}
dB_t & = R_t B_t dt, \quad B_0=1,\\
d\hatB_t & = \hatR_t \hatB_t dt, \quad \hatB_0=1, \nonumber
\end{align}
\noindent where the stochastic interest rates $R_t, \hatR_t$ follow the Cox-Ingersoll-Ross (CIR) process, \cite{cir:85}
\begin{align} \label{dynR}
  dR_t &= a(b-R_t)dt + \sigma_r \sqrt{R_t} dW_t^{(1)},  \quad R_0=r,\\
  d\hatR_t &= \hat a(\hat b - \hatR_t) dt + \sigma_{\hat{r}} \sqrt{\hatR_t}dW_t^{(2)}, \quad \hatR_0=\hatr. \nonumber
\end{align}
Here $a, \hat a$ are the mean-reversion rates, $b, \hat b$ are the mean-reversion levels, $\sigma_r, \sigma_{\hat{r}}$ are the volatilities, and $W_t^{(1)}, W_t^{(2)}$ are the Brownian motions. Without loss of generality, further we assume $a, \hat a, b, \hat b, \sigma_r, \sigma_{\hatr}$ to be constant. This assumption can be easily relaxed.

We assume that the exchange rate $Z_t$ of the two currencies is also stochastic, and its  dynamics is driven by the following stochastic differential equation (SDE)
\begin{equation} \label{Z}
dZ_t = \mu_z Z_t dt + \sigma_z Z_t dW_t^{(3)}, \quad Z_0=z,
\end{equation}
\noindent where $\mu_z, \sigma_z$ are the corresponding drift and volatility, and $W_t^{(3)}$ is another Brownian motion. From the financial point of view $Z_t$ denotes the amount of domestic currency one has to pay to buy one unit of foreign currency. Loosely speaking, this means that 1 euro could be exchanged for $Z_t$ US dollars.

As the underlying security of a CDS contract is a risky bond, we need a model of a credit risk  implied by the bond. For modeling the credit risk we use a reduced form model approach, see e.g., \cite{jarrow/turnbull:95, DuffieSingleton99, Bielecki2004, jarrow2003robust} and references therein. We define the hazard rate $\lambda_t$ to be a stochastic process given by
\begin{equation} \label{lambda}
\lambda_t = e^{Y_t}, \quad t \ge 0,
\end{equation}
\noindent where $Y_t$ follows the Ornstein-Uhlenbeck process defined by the SDE
\begin{equation} \label{Y}
dY_t = \kappa(\theta-Y_t)dt + \sigma_y dW_t^{(4)}, \quad Y_0=y, \\
\end{equation}
\noindent with $\kappa, \theta, \sigma_y$ to be the corresponding mean-reversion rate, mean-reversion level and volatility, and $W_t^{(4)}$ to be another Brownian motion. Both $Z_t$ and $\lambda_t$  are defined and calibrated in the domestic measure.

We assume all Brownian motions $W_t^{(i)}, \ i \in [1,4]$ to be dependent, and this dependence can be specified through the instantaneous correlation $\rho$ between each pair of the Brownian motions, i.e., $<d W_t^{(i)}, d W_t^{(j)}> = \rho_{ij} dt$. Hence, the whole correlation matrix in our model is
\begin{equation}
  \cal P =
 \begin{bmatrix}
    1 & \rho_{r\hatr} & \rho_{rz} & \rho_{ry} \\
    \rho_{\hatr r} & 1 & \rho_{\hatr z} & \rho_{\hatr y} \\
    \rho_{zr} & \rho_{z \hatr} & 1 & \rho_{z y} \\
    \rho_{yr} & \rho_{y \hatr} & \rho_{yz} & 1 \\
  \end{bmatrix},
\end{equation}
\noindent where all correlations $|\rho_{ij}| \le 1, \ i,j \in [r,\hatr, z, y]$ are assumed to be constant.

Finally, we define the default process $(D_t, \ t \ge 0)$ as
\begin{equation} \label{defProc}
D_t = {\bf 1}_{\tau \le t},
\end{equation}
\noindent where $\tau$ is the default time of the reference entity. In order to exclude trivial cases, we assume that $\QM(\tau > 0) = 1$, and  $\QM(\tau \le T) > 0$.

\subsection{Jumps-at-default in FX and foreign IR} \label{modelJumps}

In this section we extend the above described framework by assuming the value of the foreign currency as well as the foreign interest rate to experience a jump at the default time.

As shown in \cite{Brigo} and mentioned in the introduction, including jump-at-default into the FX rate provides a more effective way of modeling the credit/FX dependency than the instantaneous correlations imposed among the driving Brownian motions of default
intensity and FX rates. Moreover, the authors claim that it is not possible to explain the observed basis spreads during the Euro-debt crisis by using the latter mechanism alone.

However, looking at historical time-series, an existence of jump-at-default in the foreign interest rate could also be justified, especially in case when sovereign obligations are in question. For example, after the default of Russia in 1998, the Russian ruble lost about $75$\% of its value within $1.5$ months, which in turn resulted in a jump of the corresponding FX rates. On the other hand, the jump in the interest rate can be even more pronounced since the default also lowers the creditability and dramatically increases the cost of borrowing. For the above mentioned example of the Russian crisis of 1998, the short interest rate grew from $20$\% in April 1998 to $120$\% in August 1998.

Therefore, it would be interesting to see a relative contribution of each jump into the value of the Quanto CDS spread.

To add jumps to the dynamics of the FX rate in \eqref{Z}, we follow \cite{Brigo, BieleckiPDE2005} who assume that at the time of default the FX rate experiences a single jump which is proportional to the current rate level, i.e.
\begin{equation} \label{jumpZ}
d Z_t = \gamma_z Z_{t^-} d M_t,
\end{equation}
\noindent where $\gamma_z \in [-1,\infty)$ \footnote{This is to prevent $Z_t$ to be negative, \cite{BieleckiPDE2005}.} is a devaluation/revaluation parameter.

The hazard process $\Gamma_t$ of a random time $\tau$ with respect to a reference filtration is defined through the equality $e^{-\Gamma_t} = 1 - \QM\{\tau \le t|\calF_t\}$. It is well known that if the hazard process $\Gamma_t$ of $\tau$ is absolutely continuous, so
\begin{equation} \label{hazard}
\Gamma_t = \int_0^t (1-D_s) \lambda_s ds,
\end{equation}
\noindent and increasing, then the process $M_t = D_t - \Gamma_t$ is a martingale
(which is called as the compensated martingale of the default process $D_t$) under the full filtration $\calF_t \vee {\mathcal H}_t$ with ${\mathcal H}_t$ being the filtration generated by the default process. So, $M_t$ is a martingale under $\QM$, \cite{BieleckiPDE2005}.

It can be shown that under the risk-neutral measure associated with the domestic currency, the drift $\mu_z$ is, (\cite{Brigo})
\begin{equation} \label{na-drift}
\mu_z = R_t-\hatR_t.
\end{equation}

Therefore, with the allowance for \eqref{Z}, \eqref{jumpZ} we obtain
\begin{equation} \label{dzJump}
dZ_t =  (R_t - \hatR_t) Z_t dt + \sigma_z Z_t dW_t^{(3)} + \gamma_z Z_t d M_t.
\end{equation}
Thus, $Z_t$ is a martingale under the $\mathbb{Q}$-measure with respect to $\calF_t \vee {\mathcal H}_t$ as it should be, since it is a tradable asset.

Certainly, we are more interested in the negative values of $\gamma_z$ because a default of the reference entity has to negatively impact the value of its local currency. For instance, we expect the value of EUR expressed in USD to fall if some European country defaults.

Similarly, we add jump-at-default to the stochastic process for the foreign interest rate $\hatR_t$ as
\[ d \hatR_t = \gamma_\hatr \hatR_{t^-} d D_t, \]
\noindent so \eqref{dynR} transforms to
\begin{equation} \label{rJump}
d\hatR_t = \hat a(\hat b-\hatR_t )dt + \sigma_{\hatr} \sqrt{\hatR_t}dW_t^{(2)} + \gamma_{\hatr} R_t d D_t.
\end{equation}
Here $\gamma_{\hatr} \in [-1,\infty)$ is the parameter that determines the post-default cost of borrowing. We are interested in positive values of $\gamma_{\hatr}$ as the interest rate most likely will grow after a default has occurred.
Note that $\hatR_t$ is not tradable, and so is not a martingale under the $\mathbb{Q}$-measure.

\section{Pricing zero-coupon bonds} \label{zcbPrice}

To price contingent claims where the contractual currency differs from the pricing currency, e.g., Quanto CDS, we first need to determine the price of the underlying defaultable zero-coupon bond settled in foreign currency. The bond price under the foreign money market martingale measure $\hat \QM$ reads
\begin{equation}
 \hat U_t(T) = \hat{\mathbb{E}}_t\left[ \frac{\hatB_t}{\hatB_T} \hat \Phi(T) \right],
\end{equation}
\noindent where $\hatB_t/\hatB_T = \hat B(t,T)$ is the stochastic discount factor from time $T$ to time $t$ in the foreign economy, and $\Phi(T)$ is the payoff function.
However, we are going to find this price under the domestic money market measure $\mathbb{Q}$. Hence, converting the payoff to the domestic currency and discounting by the domestic money market account yields
\begin{equation}
  U_t(T) = \mathbb{E}_t\left[ B(t,T) Z_t \hat \Phi(T) \right],
\end{equation}
\noindent where without loss of generality it is assumed that the notional amount of the contract is equal to one unit of the foreign currency. This implies the payoff function to be
\begin{equation}
  \hat \Phi(T) = \m1_{\tau>T}.
\end{equation}
Further, we assume that if this bond defaults, the recovery rate $\calR$ is paid at the time of default. Therefore, the price of a defaultable zero-coupon bond, which pays out one unit of the foreign currency in the domestic economy reads
\begin{align} \label{payoff}
U_t(T) &= \mathbb{E}_t\left[ B(t,T) Z_T \m1_{\tau>T}
+ \calR B(t,\tau) Z_\tau \m1_{\tau \le T} \right] \\
&= \mathbb{E}_t\left[ B(t,T) Z_T \m1_{\tau>T} \right]
+ \calR \int_t^T \mathbb{E}_t \left[ B(t,\nu) Z_\nu \m1_{\tau \in (\nu-d\nu,nu]} \right]  = w_t(T) + \calR \int_t^T g_t(\nu)d \nu, \nonumber \\
w_t(T) &:= \mathbb{E}_t \left[ Z_{T} B(t,T) \m1_{\tau > T} \right], \qquad
g_t(\nu) := \mathbb{E}_t \left[ B(t,\nu) Z_\nu \frac{\m1_{\tau \in (\nu-d\nu,nu]}}{d\nu} \right]. \nonumber
\end{align}

As the whole dynamics of our underlying processes is Markovian, \cite{BieleckiPDE2005}, to find the price of such an instrument we use a PDE approach, so that the defaultable bond price just solves it. This is more efficient from the computationally point of view as compared, e.g., with the Monte Carlo method, despite the resulting PDE becomes four-dimensional. We discuss various approaches to its numerical solution in Section~\ref{numMethod}.

Further, conditioning on $R_t = r, \hatR_t = \hatr, Z_t = z, Y_t = y, D_t = d$, and using the approach of \cite{BieleckiPDE2005} (see Appendix~\ref{apDeriv}), we obtain
that under the risk-neutral measure $\QM$ the price $U_t(T)$ is
\begin{equation} \label{bondPrice}
U_t(T, r, \hatr, y, z) = \m1_{\tau > t} f(t, T, r,\hatr, y, z, 0) +
\m1_{\tau \le t} f(t, T, r,\hatr, y, z, 1).
\end{equation}
Here the function $f(t, T, r,\hatr, y, z, 1) \equiv u(t, T, X), \ X = \{r,\hatr, y, z\}$ solves the PDE
\begin{equation} \label{PDE1}
\fp{u(t,T,X)}{t} + {\cal L} u(t,T,X) - r u(t,T,X) = 0,
\end{equation}
\noindent where the diffusion operator $\cal L$ reads
\begin{align} \label{Ldiff}
\cal L &= \frac{1}{2}\sigma_{r}^2 r\sop{}{r} + \frac{1}{2} \sigma_{\hatr}^2 \hatr \sop{u}{\hatr} + \frac{1}{2}\sigma_z^2 z^2 \sop{}{z} + \frac{1}{2}\sigma_y^2\sop{}{y}
+ \rho_{r \hatr} \sigma_r \sigma_{\hatr} \sqrt{r \hatr}\cp{}{r}{\hatr}  \\
&+ \rho_{rz}\sigma_r \sigma_z z\sqrt{r} \cp{}{r}{z}
+ \rho_{\hatr z} \sigma_{\hatr} \sigma_z z \sqrt{\hatr} \cp{}{z}{\hatr}
+ \rho_{ry}\sigma_r \sigma_y \sqrt{r} \cp{}{r}{y}
+ \rho_{\hatr y} \sigma_{\hatr} \sigma_y \sqrt{\hatr} \cp{}{y}{\hatr}
\nonumber \\
&+ \rho_{yz} \sigma_y \sigma_z z \cp{}{y}{z}
+ a(b-r)\fp{}{r}
+ \hat a(\hat b - \hatr) \fp{}{\hatr}
+ (r - \hatr) z \fp{}{z}
+ \kappa(\theta - y) \p{}{y}. \nonumber
\end{align}

The second function $f(t, T, r,\hatr, y, z, 0) \equiv v(t, T, X)$ solves the PDE
\begin{align} \label{PDE2}
\fp{v(t,T,X)}{t} &+ {\cal L} v(t,T,X) - r v(t,T,X)
- \lambda \gamma_z z \fp{v(t,T,X)}{z} \\
&+ \lambda \left[ u(t, T, X^+) -
v(t, T, X) \right] = 0, \qquad X^+ = \{r, \hatr(1+\gamma_\hatr), y, z(1+\gamma_z)\}. \nonumber
\end{align}
\noindent where according to \eqref{lambda}, $\lambda = e^y$.

The boundary conditions for this problem should be set at the boundaries of the unbounded domain $(r, \hatr, y, z) \in [0,\infty] \times [0,\infty] \times [-\infty,0] \times[0,\infty]$. However, this can be done in many different ways. As the value of the bond price is usually not known at the boundary, similarly to \cite{Brigo} we assume the second derivatives to vanish towards the boundaries
\begin{align} \label{bc}
& \sop{u}{\nu}\Big|_{\nu \uparrow 0} = \sop{u}{\nu}\Big|_{\nu \uparrow \infty} = 0, \quad \nu \in [r, \hatr], \\
&\sop{u}{y}\Big|_{y \uparrow 0} = \sop{u}{y}\Big|_{y \uparrow -\infty} = 0, \qquad
\sop{u}{z}\Big|_{z \uparrow 0} = \sop{u}{z}\Big|_{y \uparrow \infty} = 0. \nonumber
\end{align}

We assume that the default has not yet occured  at the validation time $t$, therefore, \eqref{bondPrice} reduces to
\begin{equation} \label{bondPrice1}
U_t(T, r, \hatr, y, z) = v(t, T, X).
\end{equation}
Therefore, it could be found by solving \eqref{PDE1}, \eqref{PDE2} as follows. Since the payoff in \eqref{payoff} is a sum of two terms, and our PDE is linear, it can be solved independently for each term. Then the solution is just a sum of the two.

\subsection{Solving the PDE for $w_t(T)$} \label{wtT}

The function $w_t(T)$ solves exactly the same set of PDEs as in \eqref{PDE1}, \eqref{PDE2} \footnote{The PDEs remain unchanged since the model is same, and only the continent claim $G(t,T,r,\hatr, y,z,d)$, which is a function of the same underlying processes, changes.}.
Therefore, it can be found in two steps.

\paragraph {Step 1}
We begin by solving the PDE in \eqref{PDE1} for function $u$. Since this function  corresponds to $d=1$, it describes the evolution of the bond price {\it at or after} default. Accordingly, the terminal condition for $u$ becomes $u(T, T, X) = 0$. Indeed, this payoff does not assume any recovery paid at default, therefore, the bond expires worthless. Then, a simple analysis shows that the function $u(t, T, X) \equiv 0$ is the solution at $d=1$ as it solves the equation itself and obeys the terminal and boundary conditions. Therefore, at this step the solution can be found analytically.

\paragraph {Step 2}  As the solution of the first step vanishes, it implies that $u(t, T, X^+) \equiv 0$ in \eqref{PDE2}.

By the definition before \eqref{PDE2}, the function $v$ corresponds to the states with no default. Accordingly, from \eqref{payoff} the payoff function (which is the terminal condition for \eqref{PDE2} at $t=T$) reads
\begin{equation} \label{tc2}
v(T,T,X) = z.
\end{equation}
The boundary conditions again are set as in \eqref{bc}.

The PDE \eqref{PDE2} for $v(t,T,X)$ now takes the form
\begin{align*}
\fp{v(t,T,X)}{t} &+ {\cal L} v(t,T,X) - (r + \lambda) v(t,T,X)
- \lambda \gamma_z z \fp{v(t,T,X)}{z} = 0,
\end{align*}
\noindent subject to the terminal condition $v(T,T,X) = z$. Then, obviously $w_t(T) = v(t,T,X)$.

It can be seen, that in case of no recovery, the defaultable bond price does depend on jump in the FX rate, but does not depend on the jump in the foreign interest rate.

\subsection{Solving the PDE for $g_t(\nu)$} \label{gtT}

As far as the second part of the payoff in \eqref{payoff} is concerned, it could be noticed that the integral in \eqref{payoff} is a Riemann--Stieltjes integral in $\nu$. Therefore, it can be approximated by a Riemann--Stieltjes sum where the continuous time interval $[t,T]$ could be replaced by a discrete uniform grid with sufficiently small step $\Delta \nu = h$. So
\begin{equation}\label{eq:Integral24}
\int_t^T g_t(\nu) d\nu \approx h \sum_{i=1}^N g_t(t_i),
\end{equation}
\noindent where $t_i = t + i h, \ i \in [0,N]$, $N = (T-t)/h$. Accordingly, each term in this sum can be computed independently by solving the corresponding pricing problem in \eqref{PDE1}, \eqref{PDE2} with the maturity $t_i$.

Note, that since the pricing problem in \eqref{PDE1}, \eqref{PDE2} is formulated via backward PDEs, computation of $g_t(t_i)$ for every maturity $t_i, \ i \in [1,m]$ requires an independent solution of such a problem. This could be significantly improved if instead of the backward PDE we would work with the forward one for the corresponding density function. In that case all $U_t(t_i), \ i \in [1,m]$ can be computed in one run (by a marching method). However, we leave this improvement to discuss in detail  elsewhere. Do not confuse $m$ and $N$ since $m$ is the total number of coupon payments, while $N$ is the number of discretisation steps in the integral \eqref{eq:Integral24}.

Again, it can be observed that the function $g_t(T)$ solves exactly the same set of PDEs as in \eqref{PDE1}, \eqref{PDE2}, and, thus, again it can be found in two steps.

\paragraph {Step 1}  The problem for $u$ should be solved subject to the terminal condition
\begin{equation} \label{tcG}
g_T(T) = z (1+\gamma_z).
\end{equation}
Indeed, by the definition of $g_t(T)$, we can set $t=T$ and condition on $R_t = r, \hatR_t = \hatr, Z_t = z, Y_t = y, d=1$. Then
\begin{align}
g_T(T)dT &= \mathbb{E}_t \left[ B(t,T) Z_T \m1_{\tau \in (T-dT,T]} \Big| t=T \right]
= z \mathbb{E}_t \left[\lambda_t dt | t=T \right]  = z e^y d T,
\end{align}
\noindent see \cite{Schonbucher2003}, Section~3.2.
However, the dynamics of $Z_t$ in \eqref{dzJump} implies that when the default occurs, the value of $Z_{\tau^-}$ jumps proportionally to the value $Z_\tau = Z_{\tau^-}(1+\gamma_z)$. Thus, we arrive at \eqref{tcG}.

\paragraph {Step 2} Having an explicit representation of the function $u(t, T, X)$ obtained as the solution of the previous step, one can find $u(t, T, X^+)$ as the values of parameters $\gamma_z, \gamma_\hatr$ are known, and the values of $\lambda$ are also given (for instance, at some grid which is used to numerically solve the PDE problem in Step~1). Then, \eqref{PDE2} can be solved with respect to $v(t, T, X)$.

By the definition before \eqref{PDE2}, the function $v$ corresponds to states with no defaults. Accordingly, the recovery is not paid, and the terminal condition for this step is $v(T,T,X) = 0$. This, however, does not mean that $v=0$ solves the problem. That is because \eqref{PDE2} contains the term $\lambda u(t, T, X^+) \ne 0$ (since the terminal condition at the previous step is not zero), and so $v \ne 0$ if $\lambda \ne 0$.

It can be seen that according to this structure in case of non-zero recovery the defaultable bond price does depend on jumps in both FX and foreign IR rates.

\section{From bond prices to CDS prices} \label{bond2cds}

As this paper is mostly dedicated to modeling Quanto CDS contracts, we use the setting developed in the previous sections for risky bonds and apply it to CDS contracts.
Let us remind that a CDS is a contract in which the protection buyer agrees to pay a periodic coupon to a protection seller in exchange for a potential cashflow in the
event of default of the CDS reference name before the maturity of the contract $T$.

We assume that a CDS contract is settled at time $t$ and assures protection to the CDS buyer until time $T$. We consider CDS coupons to be paid periodically with the payment time interval $\Delta t$, and there will be totally $m$ payments over the life of the contract, i.e., $m \Delta t = T-t$. Assuming unit notional, this implies the following expression for the CDS coupon leg $L_c$, \cite{LiptonSavescu2014, BrigoMorini2005}
\begin{equation}
L_c = \mathbb{E}_t\left[\sum_{i=1}^{m} c B(t,t_i)\Delta t \m1_{\tau > t_i}\right],
\end{equation}
\noindent where $c$ is the CDS coupon, $t_i$ is the payment date of the $i$-th coupon, and $B(t,t_i) = B_t/B_{t_i}$ is the stochastic discount factor.

However, if the default occurs in between of the predefined coupon payment dates, there must be an accrued amount from the nearest past payment date till the time of the default event $\tau$. The expected discounted accrued amount $L_a$ reads
\begin{equation}
L_a = \mathbb{E}_t\left[c  B(t,\tau) (\tau - t_{\beta(\tau)}) \m1_{t < t_\beta(\tau) \le \tau < T}\right],
\end{equation}
\noindent where $t_{\beta(\tau)}$ is the payment date preceding the default event. In other words, $\beta(\tau)$ is a piecewise constant function of the form
\[ \beta(\tau) = i, \quad \forall \tau: \ t_i < \tau < t_{i+1}. \]
These cashflows are paid by the contract buyer and received by the contract issuer. The opposite expected protection cashflow $L_p$ is
\begin{equation}
L_p = \mathbb{E}_t\left[(1 - \calR)B(t,\tau)\m1_{t < \tau \le T}\right],
\end{equation}
\noindent where the recovery rate $\calR$  is unknown beforehand, and is determined at or right after the default, e.g., in court. In modern mathematical finance theory it is customary to consider the recovery rate to be stochastic, see e.g., \cite{Cohen2017}) and references therein, however, throughout this paper we assume the recovery rate being constant and known in advance.

Further, we define the so-called \emph{premium} ${\cal L}_{pm} = L_c + L_a$ and \emph{protection} ${\cal L}_{pr} = L_p$ legs, and, as usual, define the CDS par spread $s$ as the coupon which equalizes these two legs and makes the CDS contract fair at time $t$. Similar to Section~\ref{zcbPrice}, if we price all instruments under the domestic money market measure $\QM$ we need to convert the payoffs to the domestic currency and discount by the domestic money market account. Then $s$ solves the equation
\begin{align} \label{eq:CDSequation}
\sum_{i=1}^{m} & \mathbb{E}_t \left[s Z_T B(t,t_i)\Delta t \m1_{\tau > t_i}\right]
+ \mathbb{E}_t\left[s Z_T B(t,\tau)(\tau - t_{\beta(\tau)})\m1_{t<\tau<T}\right]  \\
&= \mathbb{E}_t\left[(1 - \calR)Z_\tau B(t,\tau)\m1_{t<\tau\leq T}\right]. \nonumber
\end{align}

In the spirit of \cite{ES2006} and \cite{BrigoSlide}, we develop a numerical procedure for finding the par spread $s$ from the bond prices. Consider each term in \eqref{eq:CDSequation} separately.

\paragraph{Coupons} For the coupon payment one has
\begin{align} \label{eqCoupon}
L_c &= \mathbb{E}_t \left[ \sum_{i=1}^{m} s Z_{t_i} B(t,t_i) \Delta t \m1_{\tau \ge t_i} \right] =  s\Delta t \sum_{i=1}^m \mathbb{E}_t \left[ Z_{t_i} B(t,t_i) \m1_{\tau \ge t_i} \right] = s \Delta t \sum_{i=1}^m w_t(t_i).
\end{align}
\noindent where $t_m = T$. Computation of $w_t(T)$ is described in Section~\ref{wtT}.

Note, that as follows from the analysis of the previous section, $w_t(T)$ (and, respectively, the coupon payments)  does depend on the jump in the FX rate, but does not depend on the jumps in the foreign interest rate which is financially reasonable.

\paragraph{Protection leg} A similar approach is provided for the protection leg
\begin{align} \label{eqProtection}
L_p &= \mathbb{E}_t \left[(1-\calR) Z_{\tau} B(t,\tau)\m1_{t < \tau \le T} \right]
=  (1-\calR) \int_{t}^{T} \mathbb{E}_t \left[Z_\nu B(t,\nu) \m1_{\tau \in (\nu-d\nu,\nu]} \right] d\nu \\
&=  (1-\calR)\int_{t}^{T} g_t(\nu) d \nu, \nonumber
\end{align}
\noindent where computation of $g_t(T)$ is described in Section~\ref{gtT}.

\paragraph{Accrued payments} For the accrued payment one has
\begin{align} \label{eqAccrued}
L_a &=  \mathbb{E}_t \left[ s Z_\tau B(t,\tau) (\tau - t_{\beta(\tau)}) \frac{\m1_{t < \tau < T}}{d\nu} \right]
= s \int_t^T \mathbb{E}_t \left[ Z_\nu B(t,\nu) (\nu - t_{\beta(\nu)}) \frac{\m1_{\tau \in (\nu-d\nu,\nu]}}{d\nu}  \right] d\nu \\
&= s \sum_{i=0}^{m-1} \Big\{\int_{t_i}^{t_{i+1}} (\nu - t_i) \mathbb{E}_t \left[ Z_\nu B(t,\nu) \frac{\m1_{\tau \in (\nu-d\nu,\nu]}}{d\nu}  \right] d\nu \Big\}
= s \sum_{i=0}^{m-1} \int_{t_i}^{t_{i+1}} (\nu - t_i)g_t(\nu) d\nu, \nonumber
\end{align}
\noindent where $t_0 \equiv t$, and $t_m \equiv T$.

As was mentioned in Section~\ref{gtT}, both final integrals in \eqref{eqProtection}, \eqref{eqAccrued} are Riemann--Stieltjes integrals in $\nu$. Therefore, each one can be approximated by a Riemann--Stieltjes sum where the continuous time interval $[t,T]$ could be replaced by a discrete uniform grid with a sufficiently small step $\Delta \nu = h$.

Now we have all necessary componets to compute the CDS spread. Introducing new notation
\begin{align} \label{approx1}
A_i &= \int_{t_i}^{t_{i+1}} w_t(\nu) d\nu \approx h \sum_{k=1}^N w_t(\nu_k), \\
B_i &= \int_{t_i}^{t_{i+1}} g_t(\nu) d\nu \approx h \sum_{k=1}^N g_t(\nu_k) \nonumber \\
C_i &= \int_{t_i}^{t_{i+1}} \nu g_t(\nu) d\nu \approx h \sum_{k=1}^N \nu_k g_t(\nu_k) \nonumber \\
\nu_k &= t_i + k h, \quad k=1,\ldots,N, \quad h = (t_{i+1} - t_i)/N, \nonumber
\end{align}
\noindent we re-write \eqref{eqCoupon}, \eqref{eqProtection} and \eqref{eqAccrued} in the form
\begin{align} \label{approx2}
L_p &= (1-\calR) \sum_{i=1}^{m} B_i, \qquad
L_c = s \Delta t \sum_{i=1}^m A_i, \qquad
L_a = s \sum_{i=1}^{m} \left[ C_i - t_i B_i \right].
\end{align}
Finally, combining together \eqref{eq:CDSequation} and \eqref{approx2} we obtain
\begin{align} \label{eqParSpread}
s = (1-\calR) \dfrac{\sum_{i=1}^{m} B_i}{\sum_{i=1}^{m} \left[ \Delta t A_i + C_i - t_i B_i\right]}.
\end{align}

\section{Radial Basis Function Partition of Unity Method} \label{numMethod}
In order to numerically solve \eqref{PDE1}, \eqref{PDE2} subject to the corresponding terminal and boundary conditions we use a radial basis function method. Radial basis function methods become increasingly popular for applications in computational finance, e.g.,  \cite{YCHon3,Fasshauer2,Pettersson}, thanks to their high order convergence that allows for obtaining a high resolution scheme using just a few discretization nodes. This is a crucial property when solving various multi-dimensional problems, e.g., pricing derivatives written on several assets (basket options), or those for models whose settings use several stochastic factors. Indeed, all these models suffer immensely from the curse of dimensionality, in particular, an increasing storage (memory) becomes the dominant limiting factor. This, however, can be successfully overcome by using the RBF methods. For instance, in \cite{Shcherbakov} it is shown that standard finite difference methods require about three times as many computational nodes per dimension as RBF methods to obtain the same accuracy, thus, significantly reducing the memory consumption.

Nevertheless, it should be emphasized, that the original global RBF method is computationally very expensive and rather unstable due to dense and ill-conditioned coefficient matrices\footnote{More details could be found, e.g., in \cite{Fasshauer}.}. This is a consequence of the global connections between the basis functions. Therefore, here we eliminate from the global RBF method in favour of its localised version based on the idea of partition of unity. The partition of unity method was originally introduced by \cite{Melenk} for finite element methods, and later adapted for the RBF methods by several authors, \cite{Safdari,Shcherbakov}. This approach (which further on is referred as RBF-PUM) enables a significant reduction in the number of non-zero elements that remain in the coefficient matrix, hence, lowering the computational intensity required for solving the system. In addition, this concept is supported, say in Matlab, by making use of sparse operations. Typically, as applied to our problem of pricing Quanto CDS, only about one percent of all elements remain to have non-zero values.

In order to construct an RBF-PUM approximation we start by defining an open cover $\{\Omega_j\}_{j=1}^{P}$ of our computational domain $\Omega$ such that
\begin{equation}
  \Omega \subseteq \bigcup_{j=1}^{P} \Omega_j.
\end{equation}
We select the patches $\Omega_j$ to be of a spherical form. Inside each patch a local RBF approximation of the solution $u$ is defined as
\begin{equation}\label{localRBF}
  \tilde u_j(x)= \sum_{i=1}^{n_j}\lambda_i^j \phi(\varepsilon, || x - x_i^j ||),
\end{equation}
\noindent where $n_j$ is the number of computational nodes belonging to the patch $\Omega_j$, $\phi(\varepsilon, ||x - x_i^j ||)$ is the $i$-th basis function centred at $x_i^j$, which is the $i$-th local node in the $j$-th patch $\Omega_j$, $\varepsilon$ is the shape parameter that determines the widths of basis functions, and $\lambda_i^{j}$ are the unknown coefficients. Some popular choices of the basis functions are listed in Table~\ref{TabRBF}, while their behavior as a function of the parameter $\varepsilon$ is presented in Fig.~\ref{BasisFunc}.

\begin{table}[H]
\begin{center}
\begin{tabular}{ l  c  c  c  r  }
RBF & & &  & $\phi(\varepsilon, r)$   \\
\hline
 Gaussian (GA) &  & &  &  $\exp{(-\varepsilon^2r^2)}$ \\
 Multiquadric (MQ) &  & &  & $\sqrt{1+\varepsilon^2r^2}$ \\
 Inverse Multiquadric (IMQ) & & &  & $1/\sqrt{1+\varepsilon^2r^2}$ \\
 Inverse Quadratic (IQ) & & &  & $1/(1+\varepsilon^2r^2)$ \\
\hline
\end{tabular}
\caption{Commonly used radial basis functions.}
\label{TabRBF}
\end{center}
\end{table}
\begin{figure}[H]
\centering
\subfigure[Gaussian]{\label{fig:a}\includegraphics[width=0.32\textwidth]{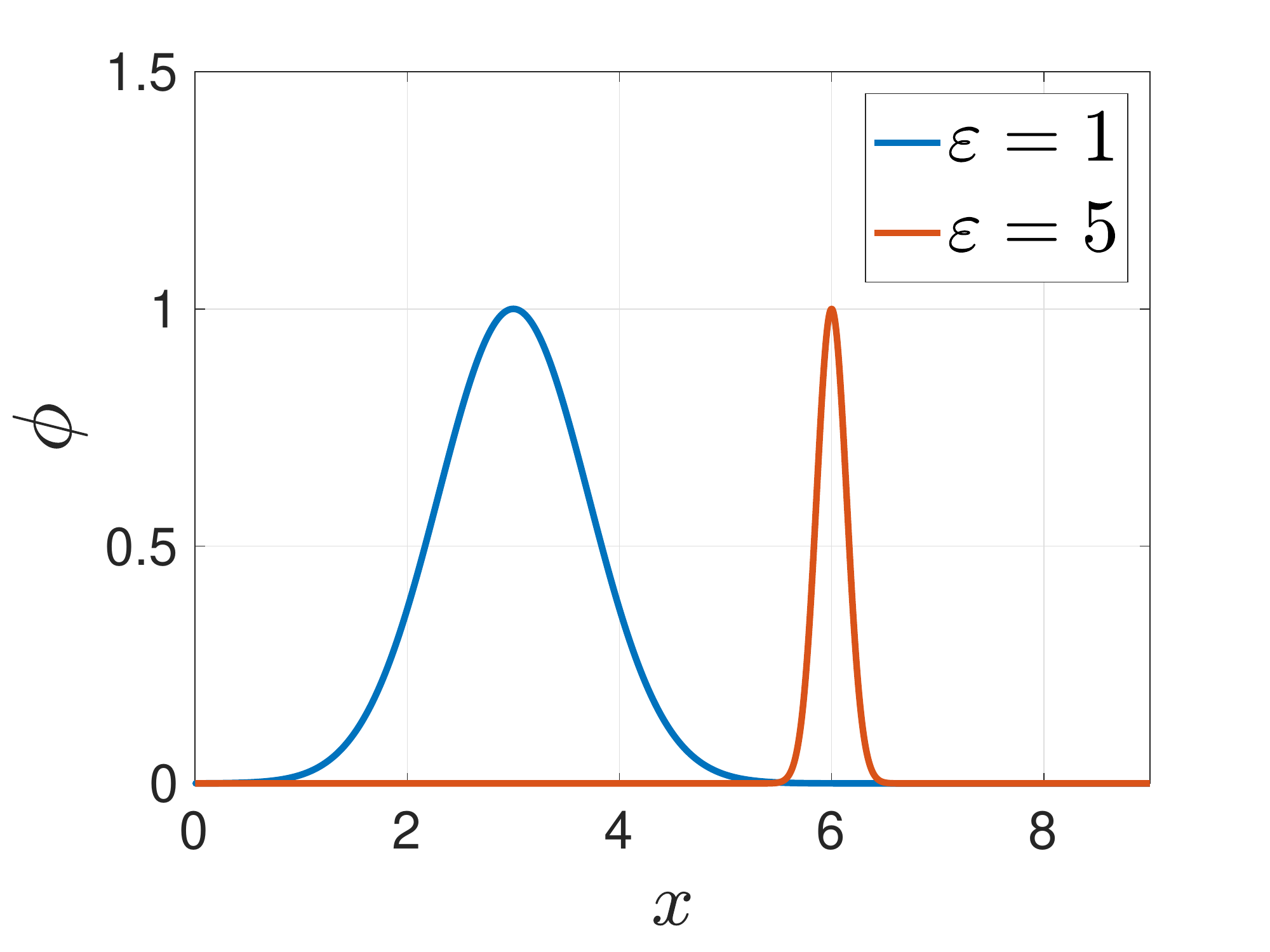}}
\subfigure[Multiquadric]{\label{fig:b}\includegraphics[width=0.32\textwidth]{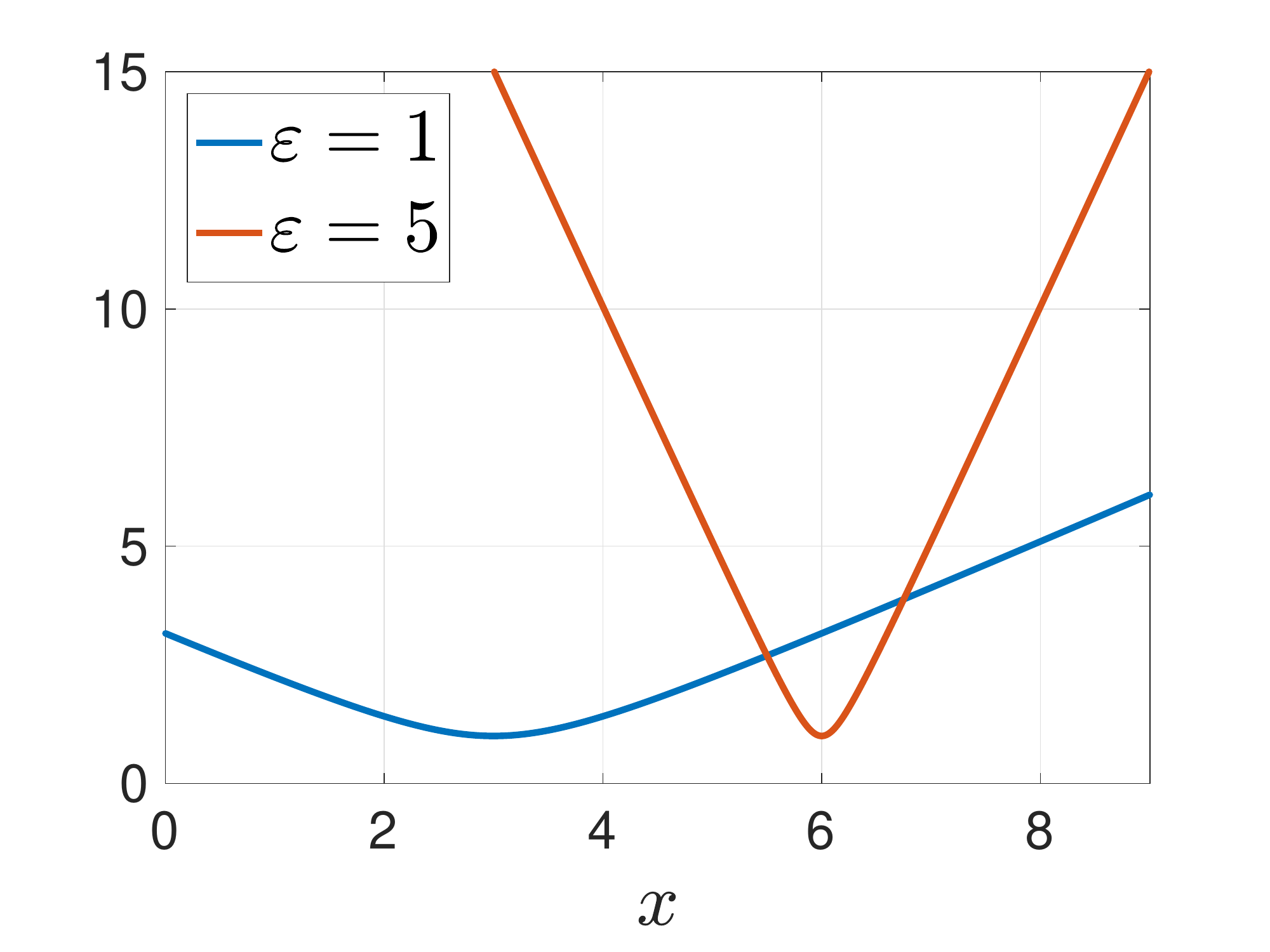}}
\subfigure[Inverse quadratic]{\label{fig:c}\includegraphics[width=0.32\textwidth]{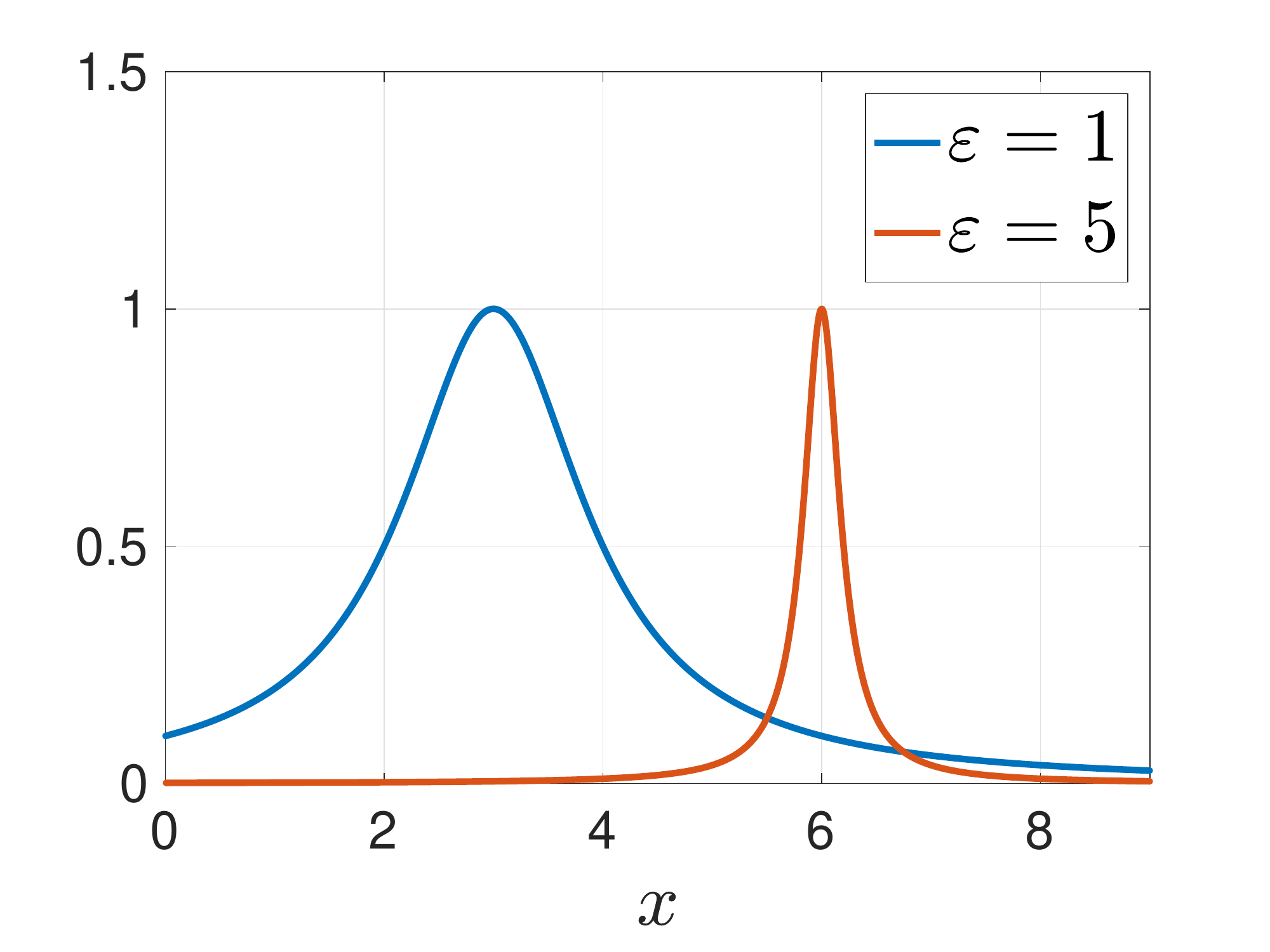}}
\caption{Commonly used basis functions with respect to the value of the shape  parameter~$\varepsilon$.}
\label{BasisFunc}
\end{figure}

In addition to the patches, we also construct partition of unity weight functions
$w_j(x),\, j = 1, \ldots, P$, subordinated to the open cover, such that
\begin{equation}
  \sum_{j=1}^{P} w_j(x) = 1, \quad \forall x\in \Omega.
\end{equation}
Functions $w_j(x)$ can be obtained, e.g., by Shepard's method, \cite{Shepard}, from compactly supported generating functions $\varphi_j(x)$
\begin{equation}
  w_j(x) = \frac{\varphi_j(x)}{\sum_{i=1}^{P} \varphi_i(x)}, \quad j=1,\ldots,P, \quad \forall x\in \Omega.
\end{equation}
The generation functions $\varphi_j(x)$ must fulfil some smoothness requirements. For instance, for the problem considered in this paper
they should be at least $C^{2}(\Omega)$. To proceed, as a suitable candidate for $\varphi_j(x)$ we choose fifth-order Wendland's functions, \cite{Wendland}
\begin{equation}
  \varphi(r) = (5r+1)(1-r)^5_{+}, \quad r \in \mathbb{R},
\end{equation}
\noindent with the support $\varphi(r) \in \mathbb{B}^{4}(0, 1)$, where $\mathbb{B}^4(0,1)$ is a unit four-dimensional ball centred at the origin. In order to map the generating function to the patch $\Omega_j$ with the centre $c_j$ and radius $\rho_j$, it is shifted and scaled as
\begin{equation}
 \varphi_{j}(x) = \varphi_j \left( \frac{||x- c_j||}{\rho_j}\right), \quad \forall x \in \Omega.
\end{equation}
Further we blend the local RBF approximations with the partitions of unity weight and obtain a combined RBF-PUM solution $\tilde u(x)$ as
\begin{equation}\label{RBFPUMapprox}
 \tilde u(x) = \sum_{j=1}^{P}w_j(x) \tilde u_j(x).
\end{equation}
The RBF-PUM approximation in the given form allows to maintain accuracy similar to that of the global method while significantly reducing the computational effort (see e.g., \cite{Shcherbakov}, \cite{Ahlkrona}). Moreover, it was shown in \cite{vonSydow} that RBF-PUM is the most efficient numerical method for higher-dimensional problems among deterministic methods that rely on a node discretization.

\section{Numerical Experiments} \label{experiments}

In this section we perform numerical experiments to find the Quanto-adjusted CDS par spread value $s$ and its sensitivity to market conditions. The par spread is computed as in \eqref{eqParSpread} while the bond price is obtained from \eqref{bondPrice1} by approximating the PDEs in \eqref{PDE1}, \eqref{PDE2} using radial basis function partition of unity method with $1296$ patches.  We select Gaussian functions to construct a finite RBF basis on $28561$ nodes. As $[r,\hatr, z] \in [0,\infty)$ and $y \in (-\infty,\infty)$, we truncate each semi-infinite ot infinite domain of definition sufficiently far away from the evaluation point, so an error brought by this truncation is relatively small. In particular, we use $r_{\min} = \hat r_{\min} = z_{\min} = 0$, $y_{\min} = -6$, $r_{\max} = \hat r_{\min} = z_{\min} = 4$, $y_{\max} = -2$. Accordingly, we move the boundary conditions, defined in \eqref{bc}, to the boundaries of this truncated domain.

Note, that in our numerical method (see Section~\ref{numMethod}), we substitute \eqref{bc} into the pricing PDEs \eqref{PDE1}, \eqref{PDE2} and then derive a corresponding reduced form discrete (boundary) operator. As this explicitly incorporates the boundary conditions into the pricing scheme, the latter can be implemented uniformly with no extra check that the boundary conditions are satisfied\footnote{Our experience shows that this approach works better and provides a more stable RBF approximation.}.

For marching in time we use the backward differentiation formula of second order (BDF-2), \cite{BDFbook}. In order to compute the accrued amount $L_a$ as in \eqref{approx2} we use the time discretisation with two-weeks intervals. The method is implemented in Matlab 2017a, and the experiments were run on a MacBook Pro with a Core i7 processor with 16 GB RAM.

To investigate Quanto effects and their impact on the price of a CDS contract, we consider two similar CDS contracts. The first one is traded in the foreign economy, e.g., in Italy, but is priced under the domestic risk-neutral $\QM$-measure, hence is denominated in the domestic currency (US dollars). To find the price of this contract our approach described in the previous sections is utilized. The second CDS is the same contract which is traded in the domestic economy and is also priced in the domestic currency. As such, its price can be obtained by solving the same problem as for the first CDS, but where the equations for the foreign interest rate $\hatR_t$ and the FX rate $Z_t$ are excluded from consideration. Accordingly, all related correlations which include index $z$ and $\hatr$ vanish, and the no-jumps framework is used. However, the terminal conditions remain the same as in Section~\ref{bond2cds} as they are already expressed in the domestic currency\footnote{Alternatively, the whole four-dimensional framework could be used if one sets $z=1, \hatr = r, \gamma_z = \hat a = \sigma_\hatr = \gamma_\hatr = 0$, and $\rho_{\cdot,z} = \rho_{\cdot,\hatr} = \rho_{z,\hatr} = 0$, where $\langle \cdot \rangle \in [r, z, y]$.}.

Below we denote the CDS spread found by using the first contract as $s$, and the second one as~$s_d$. So the impact of Quanto effects could be determined as the difference between these two spreads
\begin{equation}
\Delta s = s - s_d,
\end{equation}
\noindent which below is quoted as ``basis" spread.

A default set of parameter values used in our numerical experiments is given in Table~\ref{TabParam}. It is also assumed that in this default set all correlations are zero. If not stated otherwise, we use these values and assume the absence of jumps in the FX and foreign interest rates. The reference 5Y CDS par spread value $s_d$ under these assumptions is $s_d = 365$ bps.
\begin{table}[H]
\begin{center}
\begin{tabular}{c c c c c c c c}
\multicolumn{8}{ c }{Interest rates} \\
\hline
$r$ & $a$ & $b$ & $\sigma_r$ & $\hat r$ & $\hat a$ & $\hat b$ & $\sigma_{\hat{r}}$ \\
0.02 & 0.08 & 0.1 & 0.01 & 0.03 & 0.08 & 0.1 & 0.08 \\
\hline
&&&&&&& \\
\multicolumn{8}{ c }{Hazard and FX rates, Tenor, and Recovery} \\
\hline
$y$ & $a_y$ & $b_y$ & $\sigma_y$ & $z$ & $\sigma_z$ & $T$ & $\mathcal{R}$ \\
-4.089 & 0.0001 & -210 & 0.4 & 1.15 & 0.1 & 5 & 0.45\\
\hline
\end{tabular}
\caption{The default set of parameter values used in the experiments.}
\label{TabParam}
\end{center}
\end{table}

The impact of the jump amplitude on the basis spread is presented in Fig.~\ref{fig:Gammas} for jumps in the interest rate (left panel) and exchange rate (right panel). In the absence of jumps ($\gamma_\hatr = 0$ or $\gamma_z = 0$) the domestic and foreign spreads have a basis about 3~bps. This is close to the normal situation where no currency and interest rate depreciation occurs. In fact, this was the case until recently when Quanto effects were not taken into account. For example, Greek CDS with payments in dollars and in euros were traded with a 1 bp difference in 2006, \cite{IFR2011}.
\begin{figure}[H]
\centering
\subfigure{\includegraphics[width=0.4\textwidth]{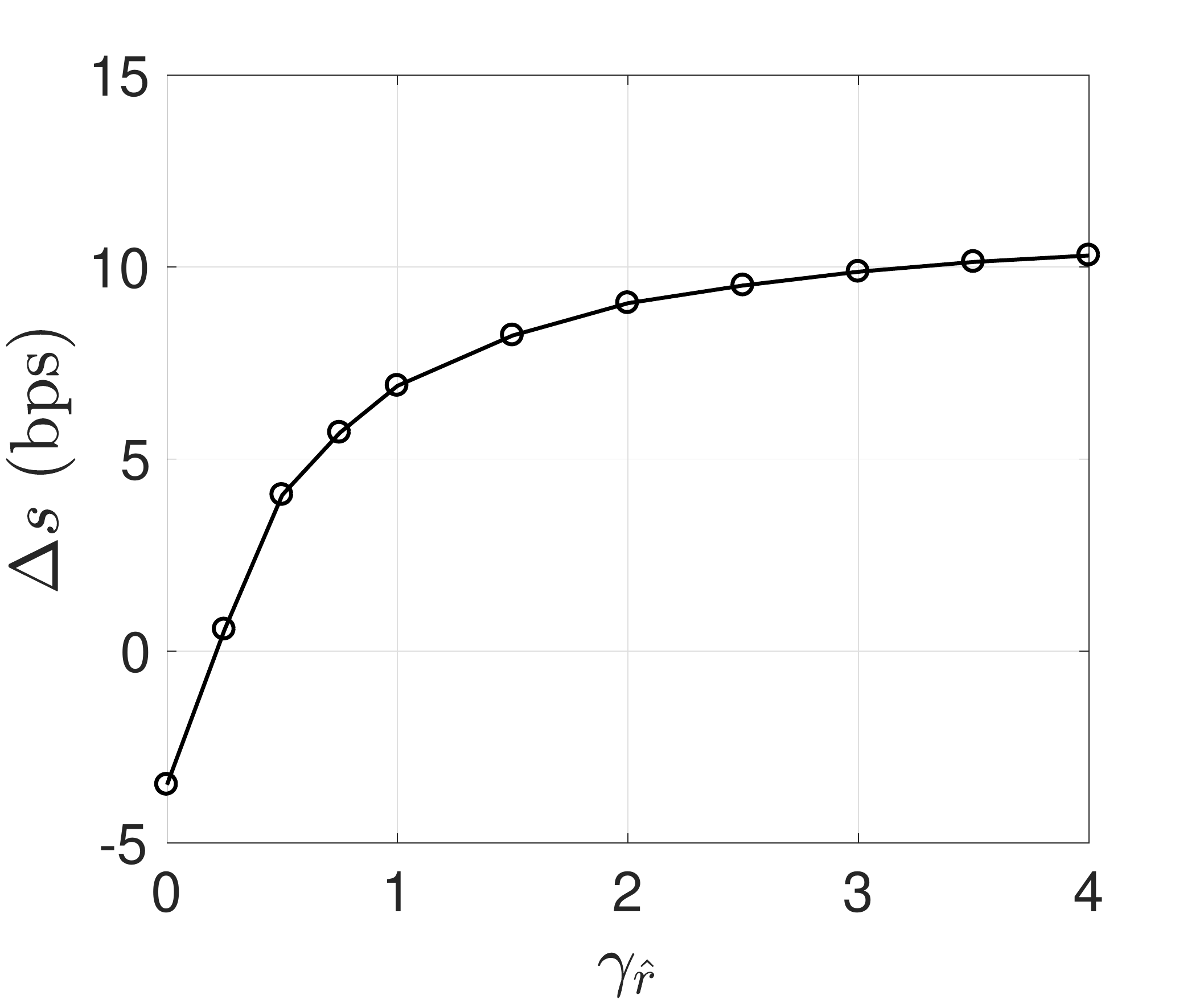}}
\hspace{0.7cm}
\subfigure{\includegraphics[width=0.4\textwidth]{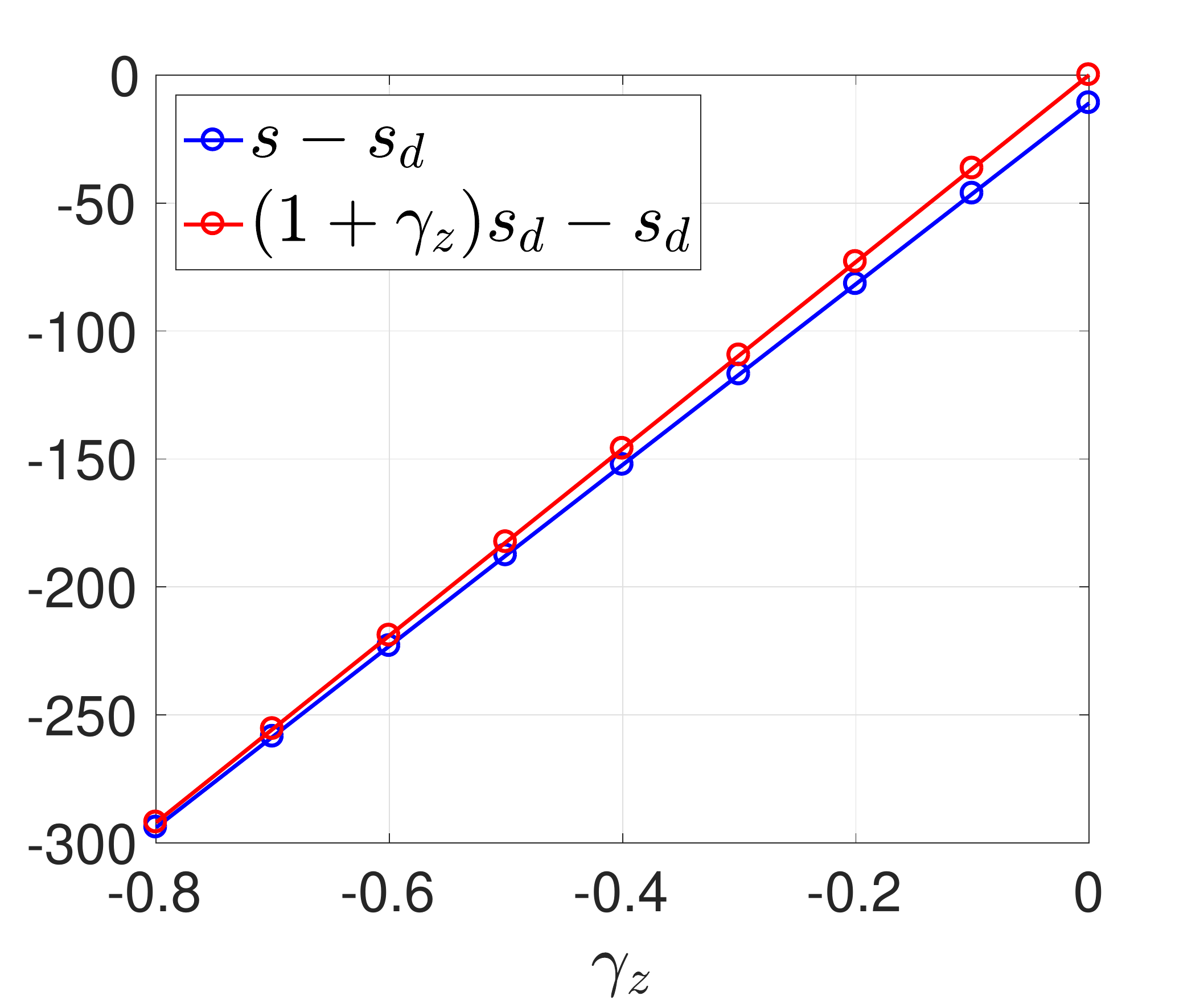}}
\caption{The influence of the jump-at-default amplitude on the 5Y CDS par spread.}
\label{fig:Gammas}
\end{figure}

The results displayed in the left panel of Fig.~\ref{fig:Gammas} demonstrate that the impact of jump in $\hat{R}_t$ increases  rapidly for $\gamma_{\hatr}\in[0,2]$ and then saturates at some level. We explain this saturation by investor's indifference to whether the interest rate increases by $300\%$ or $400\%$ since the interest rate level does not directly affect the protection amount, rather it influences the investment climate in the foreign economy. In contrast, the FX rate has an immediate impact (right panel) on the protection since a depreciation of the foreign currency diminishes the amount being paid out when converted to the US dollars. Through the well-known approximation of the hazard rate via the spread and bond recovery rate\footnote{Which is correct if the hazard rate $\lambda_t$ is constant.}
\[ \lambda\approx \frac{s}{1-\mathcal{R}}, \]
\noindent and using the results in~\cite{Brigo}, we identify that
\[ s \approx (1+\gamma_z)s_d. \]
That is, the CDS spread in the foreign currency is approximately proportional to the reference USD spread with the coefficient $(1+\gamma_z)$. Therefore, in the case of the foreign currency devaluation the coupon payments in the foreign currency should be lower. It can be observe that the results provided by our model perfectly align with this intuition.

We emphasize, that since the effect of the jump-at-default in the FX rate was thoroughly investigated in \cite{Brigo}\footnote{In \cite{Brigo}, however, only constant foreign and domestic interest rates are considered, while in this paper they are stochastic even in the no-jumps framework.}, in this paper we mainly focus on examining the impact of the jump-at-default in the foreign interest rate. However, influence of the other model parameters is also investigated and reported.
\begin{figure}[H]
\centering
\includegraphics[width=0.7\textwidth]{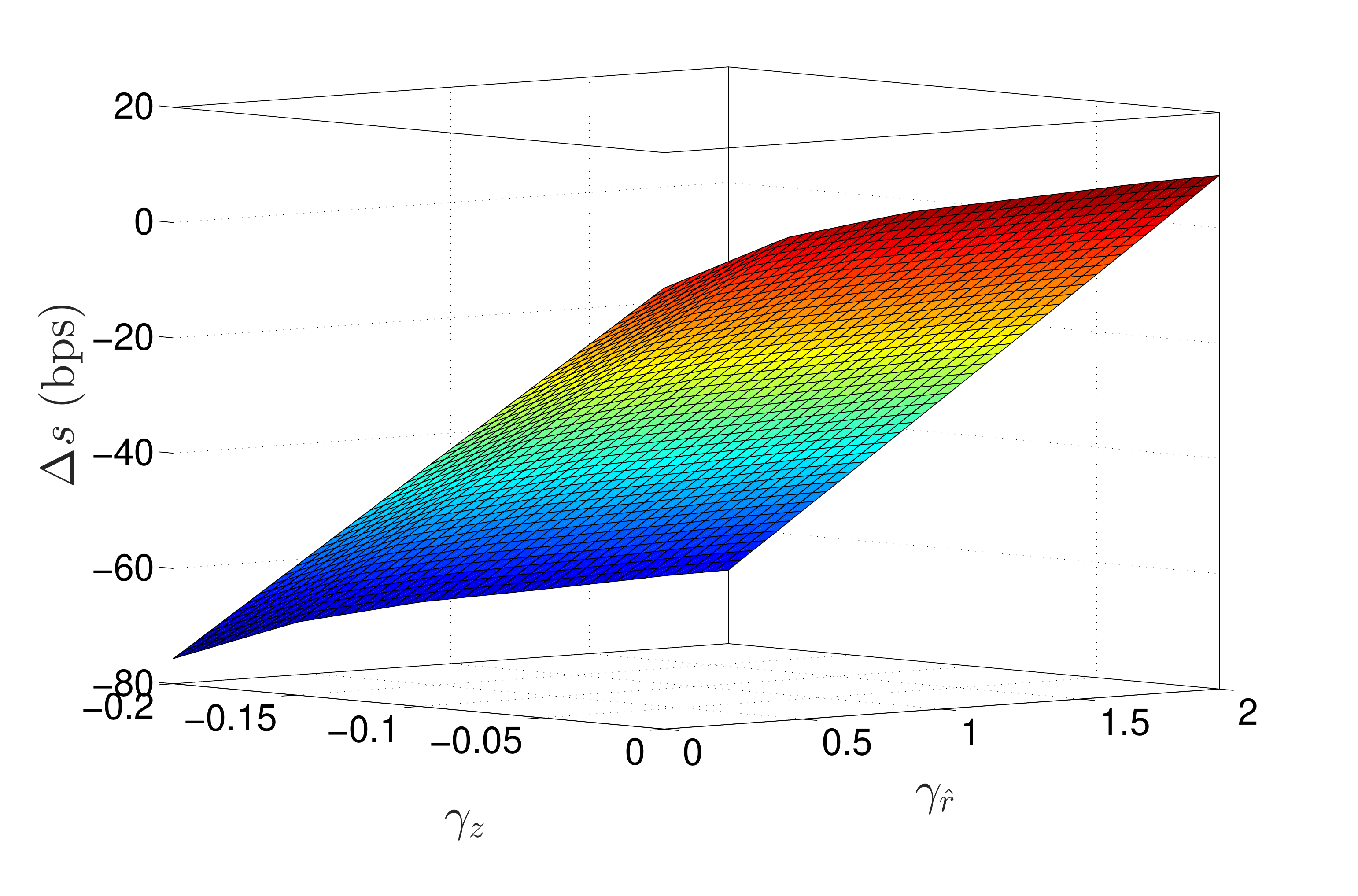}
\caption{Basis spread as a function of the jump amplitude in the foreign exchange and interest rates.}
\label{fig:SurfGammas}
\end{figure}

In Fig.~\ref{fig:SurfGammas} the joint influence of jumps in the FX and foreign IR on the value of the basis spread is presented. It can be seen that the jump-at-default in $\hatR_t$, which occurs simultaneously with the jump-at-default in $Z_t$, decreases the basis spread magnitude as compared with a similar case where $\hatR_t$  does not jump. This decrease slightly depends on the level of $\gamma_z$ and for our set of parameters is about 10 bps. To better illustrate this point Fig.~\ref{fig:GamR_GamZ} represents some slices of the surface in Fig.~\ref{fig:SurfGammas}. It can be seen that the smaller is $\gamma_z$ the bigger is the impact of $\gamma_\hatr$, which, however, reaches some saturation at $\gamma_\hatr \approx 4$.
\begin{figure}[H]
\centering
\includegraphics[width=0.7\textwidth]{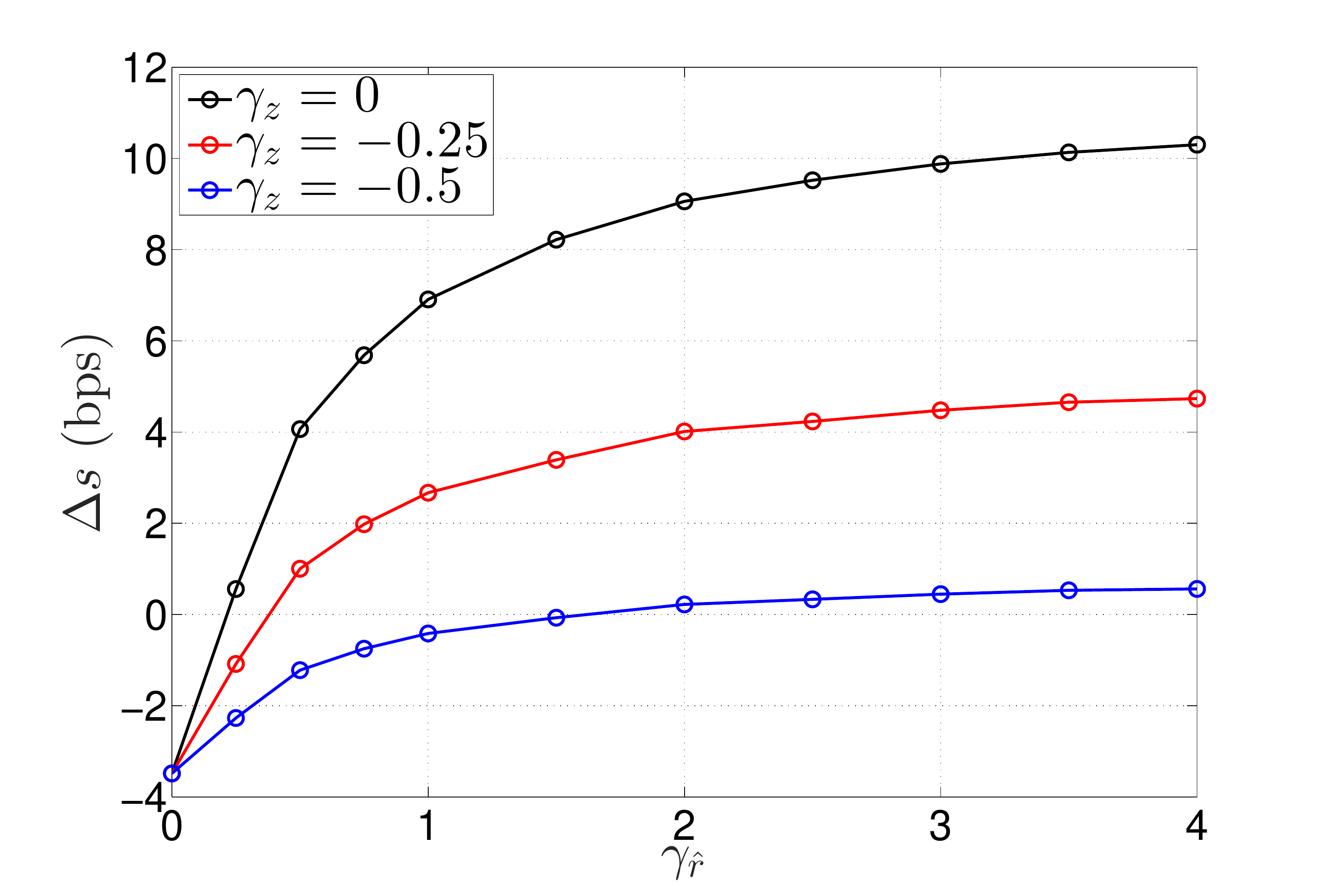}
\caption{The influence of the jump amplitude $\gamma_\hatr$ at various values of the jump amplitude $\gamma_z$. Note, the lines are shifted to start from the same point.}
\label{fig:GamR_GamZ}
\end{figure}

In the next series of experiments we look at the influence of correlations among the stochastic factors on the Quanto-adjusted CDS value. The results presented in Fig.~\ref{fig:Correlations} indicate that only the correlations between the hazard rate $\lambda_t$ (or $Y_t$) and the stochastic factors that experience a jump-at-default, $\hatR_t, Z_t$, are relevant. The impact of the correlation between the hazard and FX rates $\rho_{yz}$ can range in 45 bps, while the impact of the correlation $\rho_{y\hatr}$ between the hazard rate and the foreign interest rate does not exceed 3 bps.
\begin{figure}[H]
\centering
    \subfigure{\includegraphics[width=0.32\textwidth]{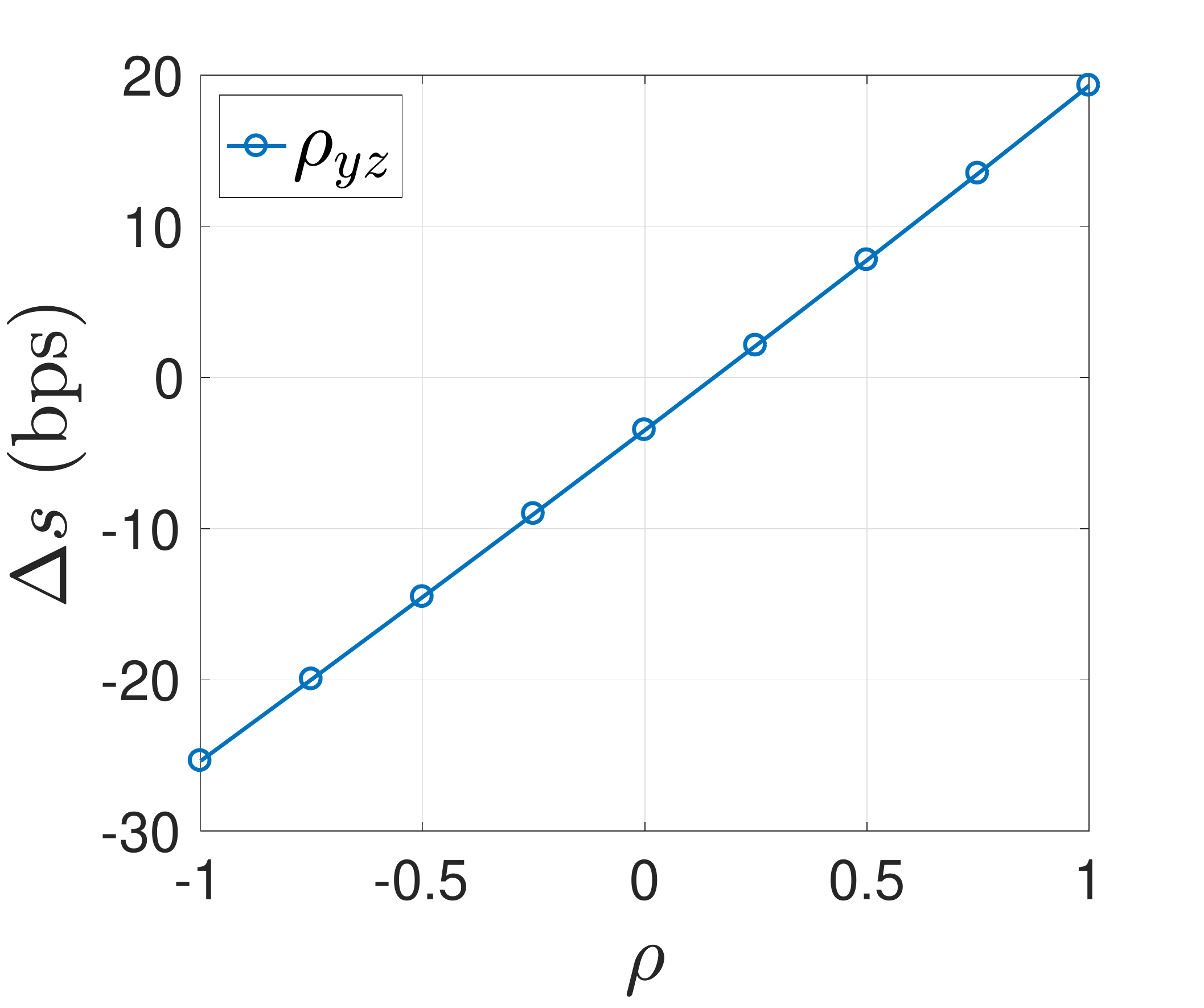}}
     \subfigure{\includegraphics[width=0.32\textwidth]{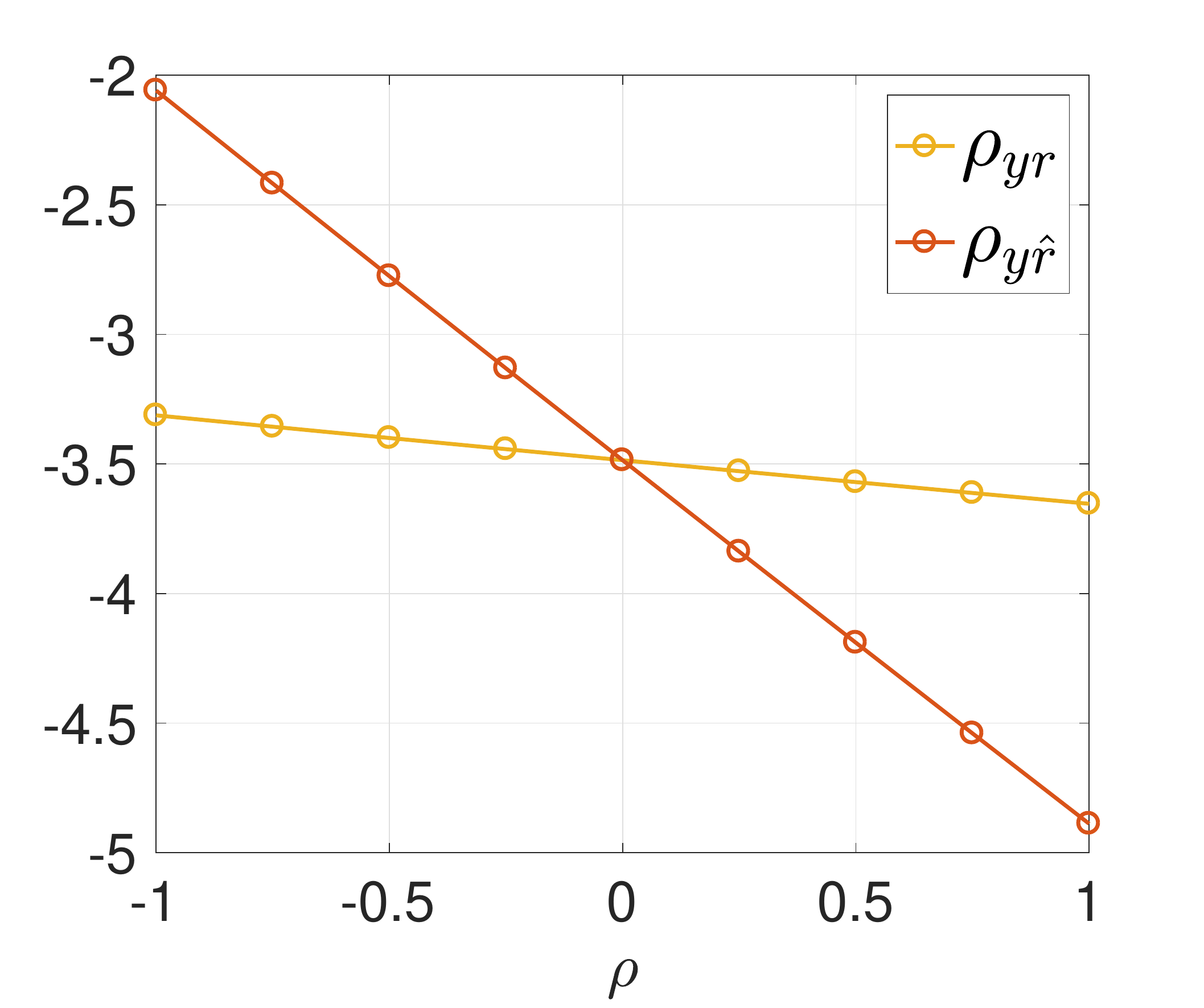}}
     \subfigure{\includegraphics[width=0.32\textwidth]{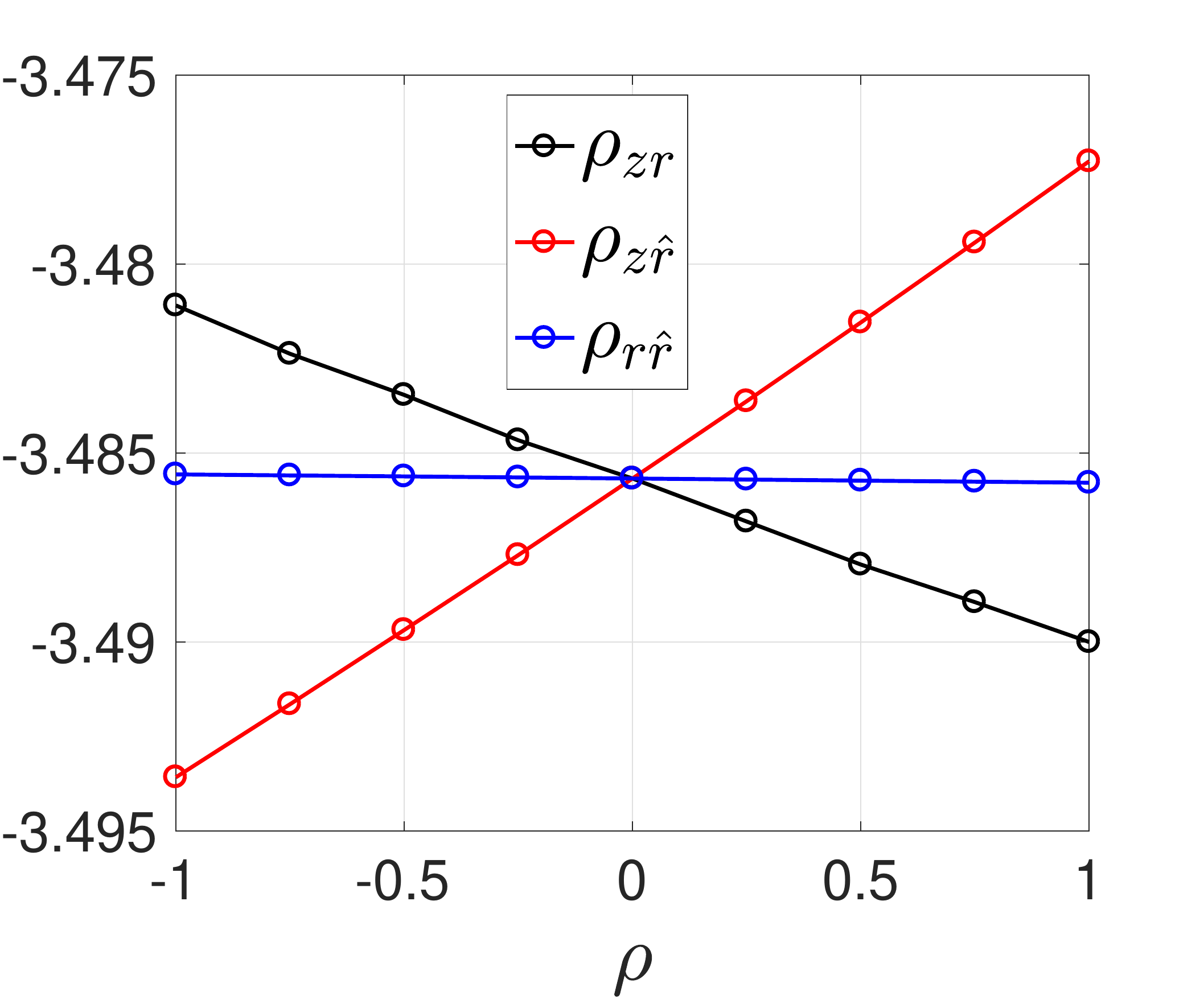}}
\caption{The influence of correlations on the 5Y  CDS par spread. No jumps-at-default are assumed.}
\label{fig:Correlations}
\end{figure}

Fig.~\ref{fig:GamRcorrs} shows how the level of correlation between the foreign interest rate $\hatR_t$ and the other three stochastic factors affects the basis spread
at various values of $\gamma_\hatr$ at $\gamma_z = 0$. In accordance with what was already mentioned, the results show that the correlations just slightly affect the basis spread value, except correlations with the hazard rate $\rho_{yz}, \rho_{y\hatr}$.
\begin{figure}[H]
\centering
     \subfigure{\includegraphics[width=0.32\textwidth]{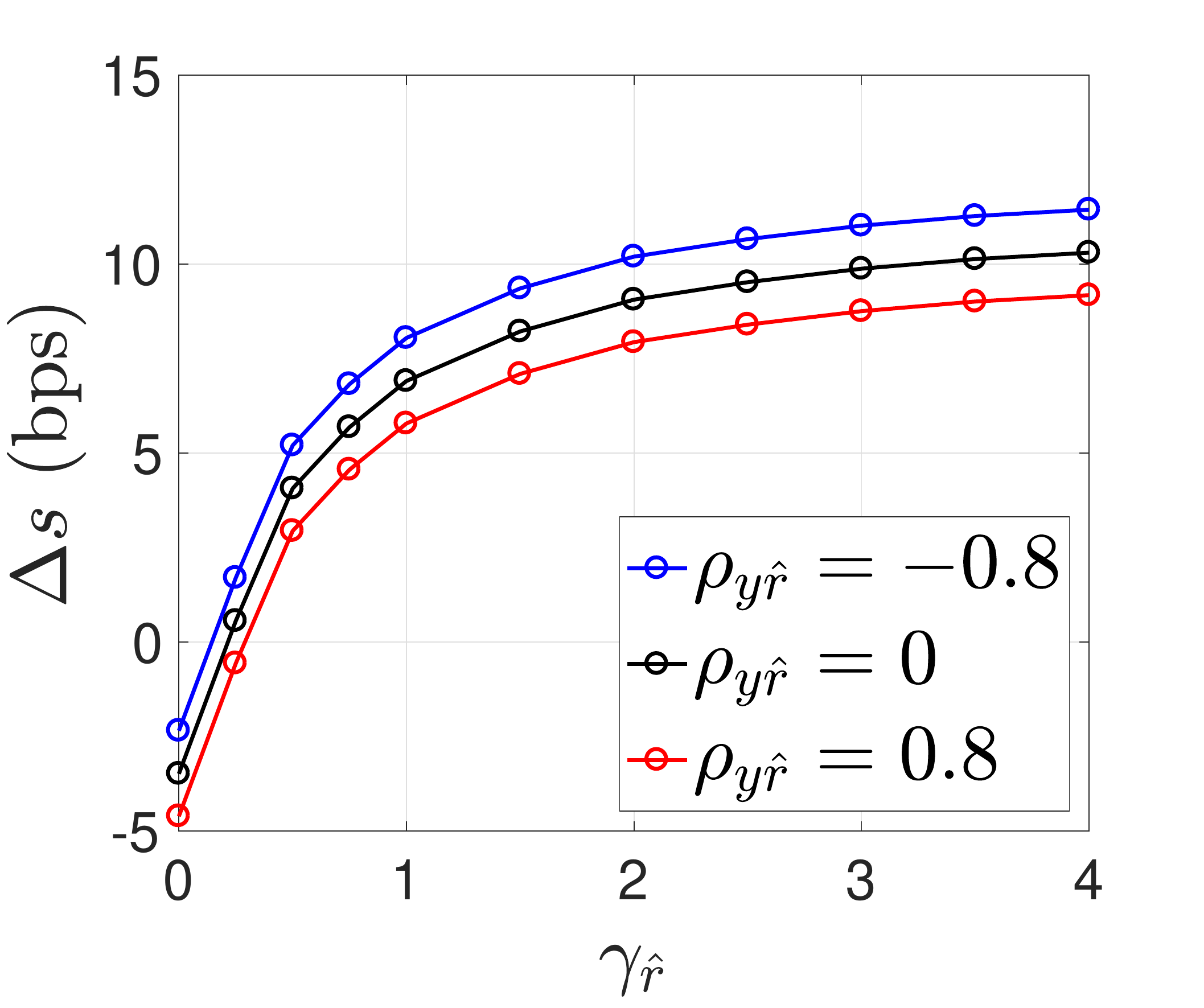}}
     \subfigure{\includegraphics[width=0.32\textwidth]{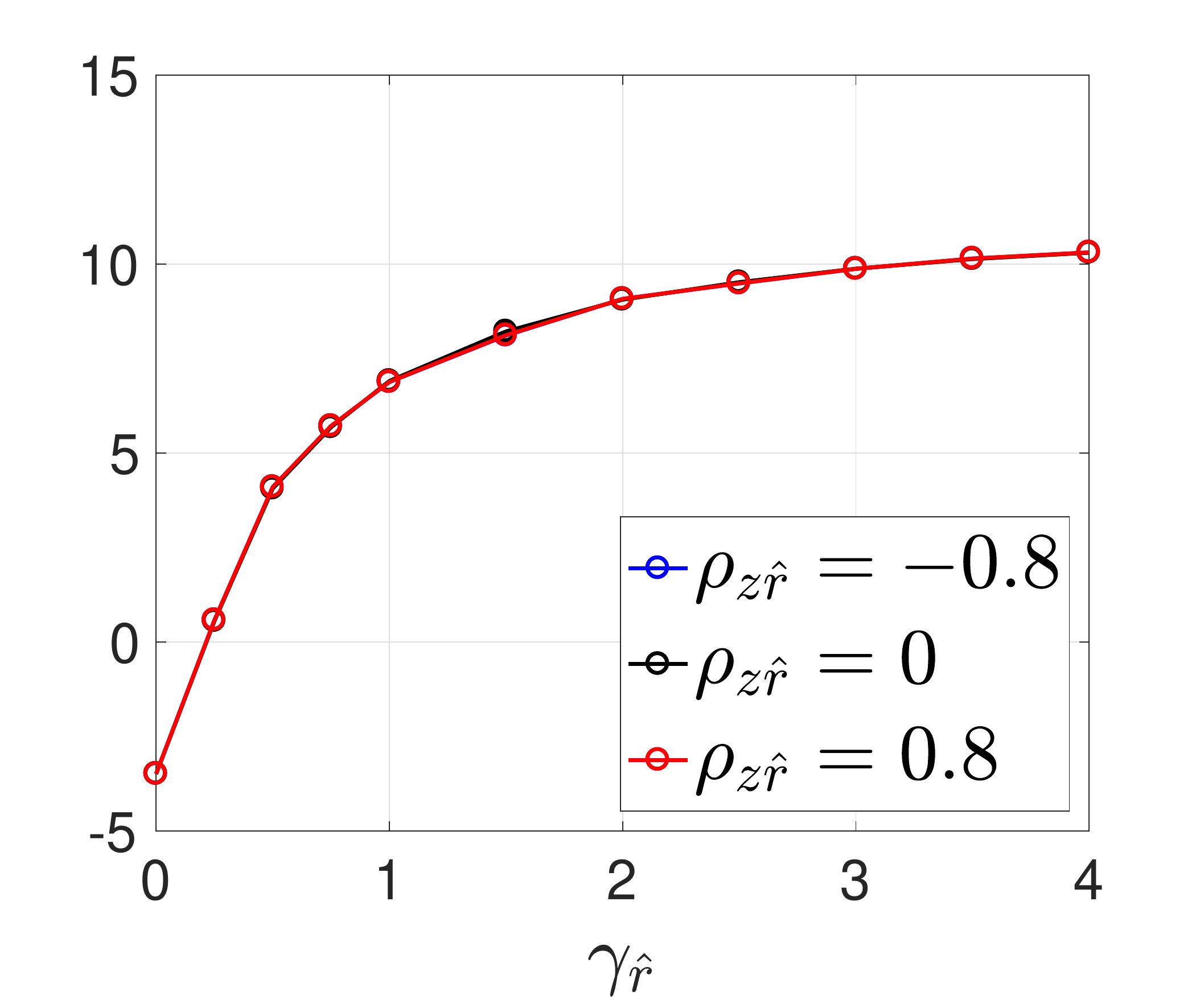}}
     \subfigure{\includegraphics[width=0.32\textwidth]{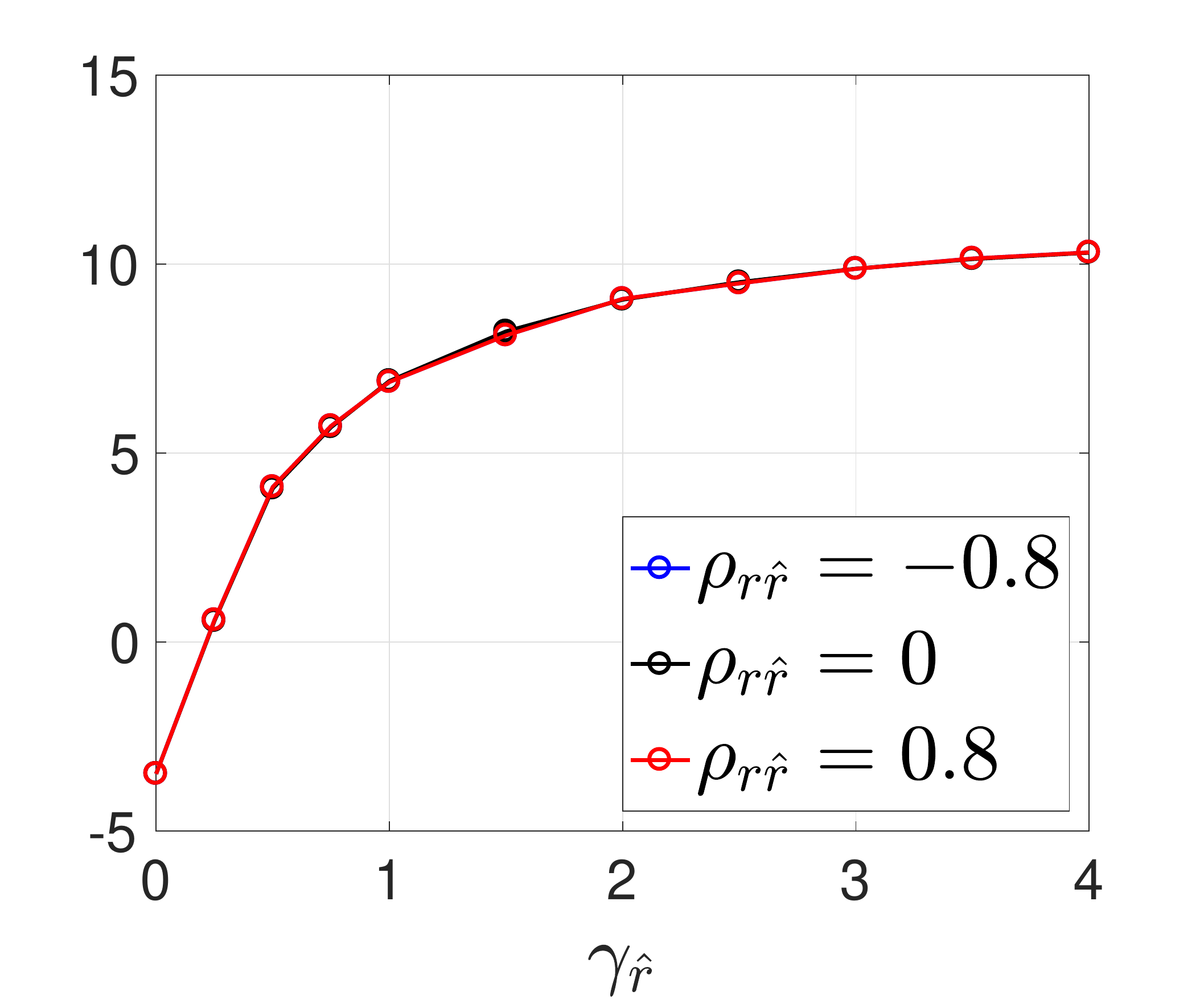}}
\caption{The influence of correlations on the basis spread for various $\gamma_\hatr$ at $\gamma_z = 0$.}
\label{fig:GamRcorrs}
\end{figure}

Fig.~\ref{fig:Volatility} shows the sensitivity of the foreign CDS to volatilities of the stochastic factors. We notice that the impact of the hazard rate volatility $\sigma_y$ is the strongest, and under the jump-free setup can make the CDS quotes varying in range in 17 bps. The effect of the FX rate volatility $\sigma_z$ is slightly weaker, while the effect of the interest rate volatilities $\sigma_r, \sigma_\hatr$ is almost negligible.
\begin{figure}[H]
\centering
     \subfigure{\includegraphics[width=0.32\textwidth]{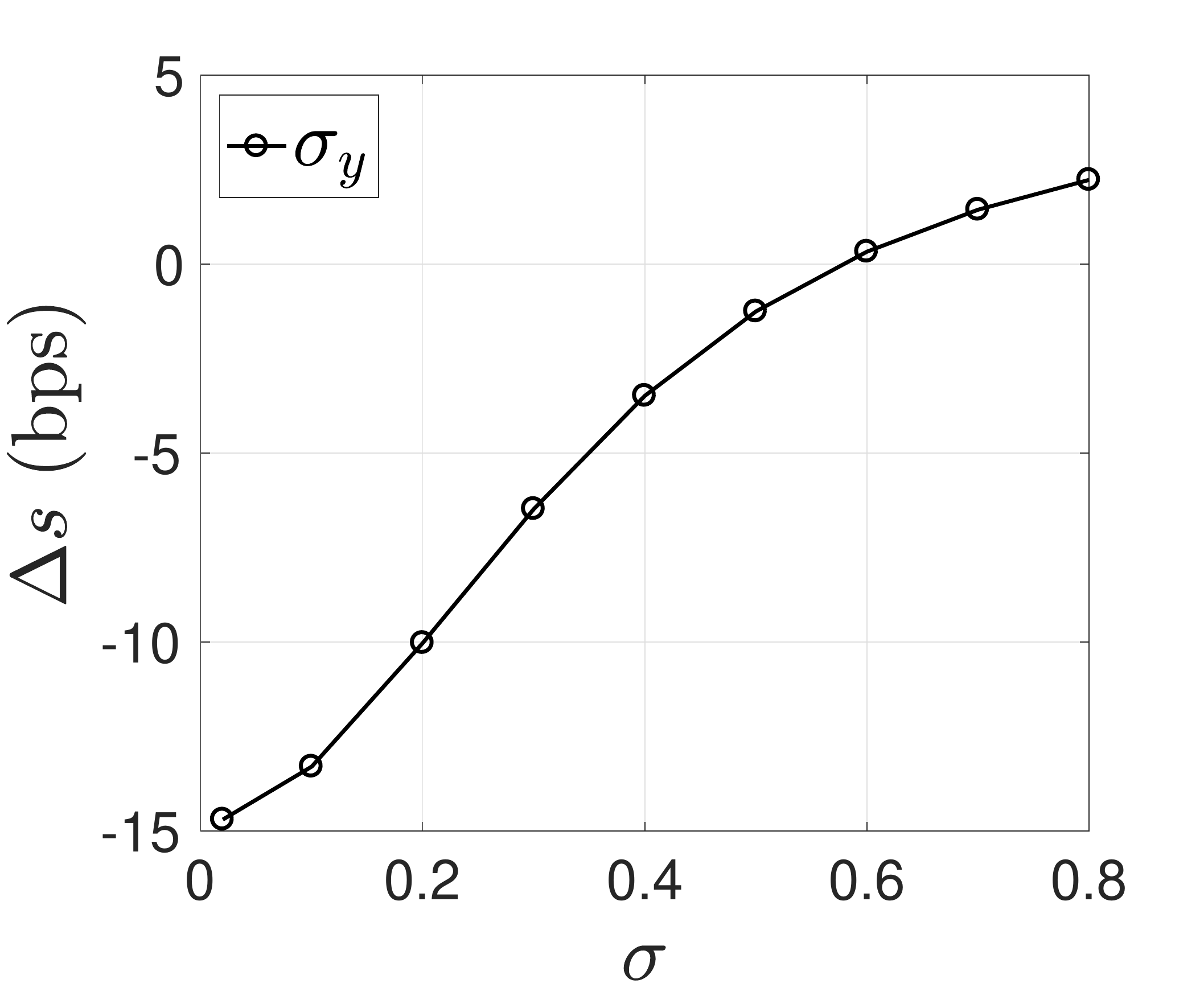}}
     \subfigure{\includegraphics[width=0.32\textwidth]{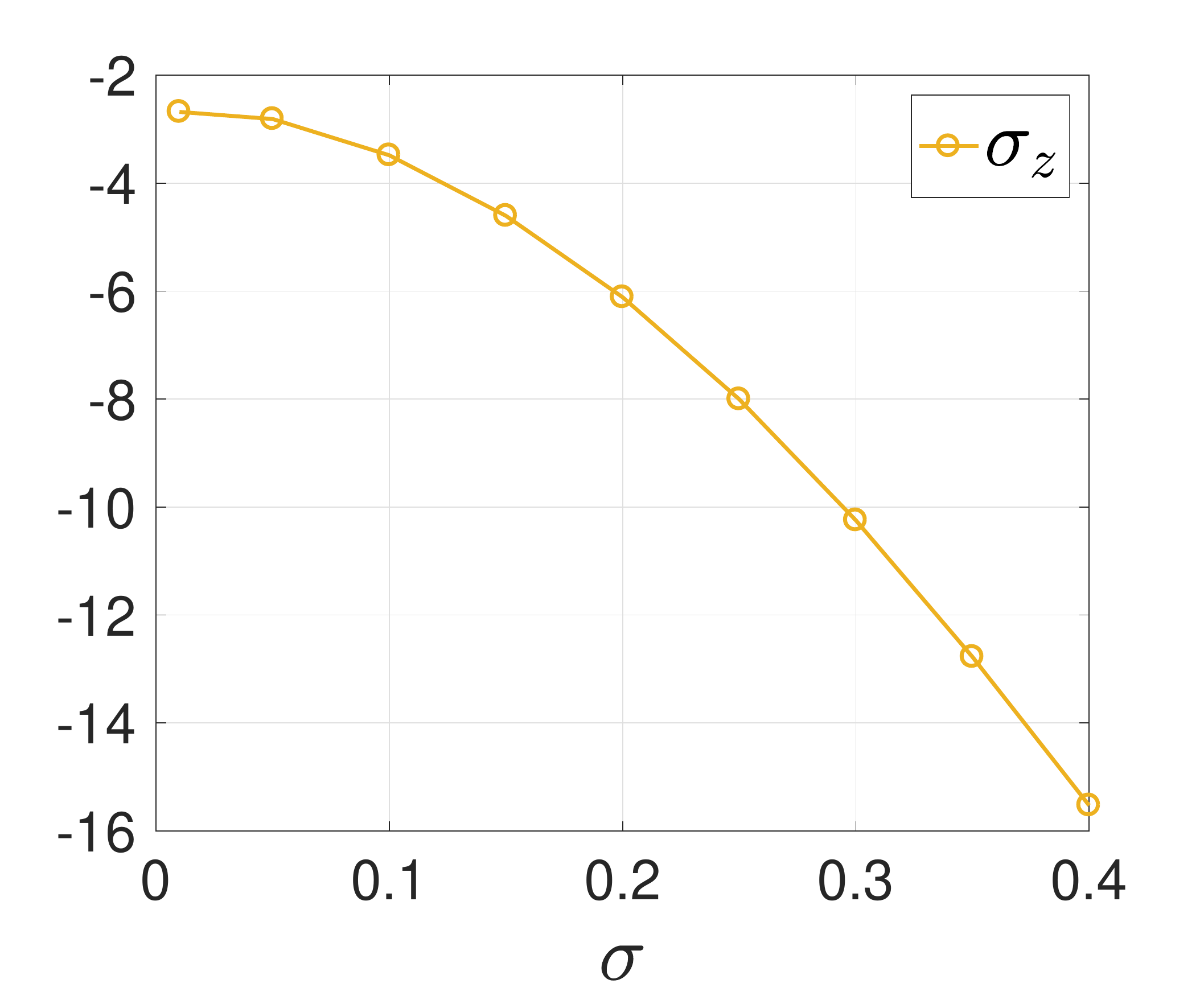}}
     \subfigure{\includegraphics[width=0.32\textwidth]{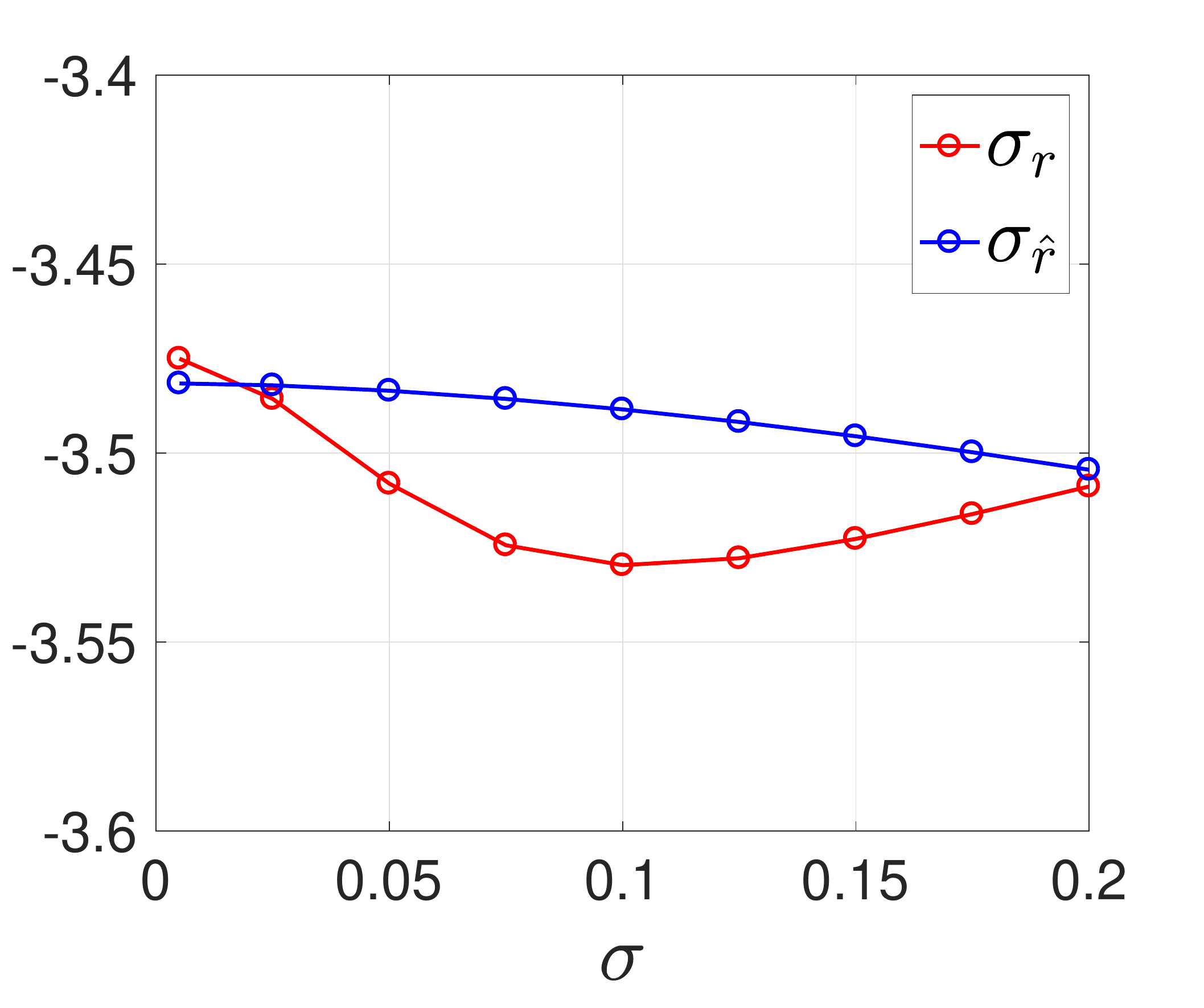}}
\caption{The influence of the volatilities on the 5Y  CDS par spread. No jumps-at-default are assumed.}
\label{fig:Volatility}
\end{figure}
%


To analyze the impact of the two most influential volatilities under the presence of jumps in $\hatR_t$, we test how the volatility level affects the foreign CDS par spread with respect to the jump amplitude. These results are presented in Fig.~\ref{fig:GamRhazard}, \ref{fig:GamRfx}. Increasing $\sigma_y$ in combination with a $100\%$ raise in $\hatR$ gives rise to the basis spread changing the sign from being negative to positive, while the absolute value of the growth in $\Delta s$ is
about 15 bps. However, the influence of $\sigma_z$ is just opposite. Larger values of $\sigma _z$ give rise to a negative basis spread, which though can be somewhat compensated by the increasing amplitude $\gamma_\hatr$ of the jump-at-default in the foreign interest rate $\hatR_t$.
%
\begin{figure}[H]
\centering
     \subfigure{\includegraphics[width=0.4\textwidth]{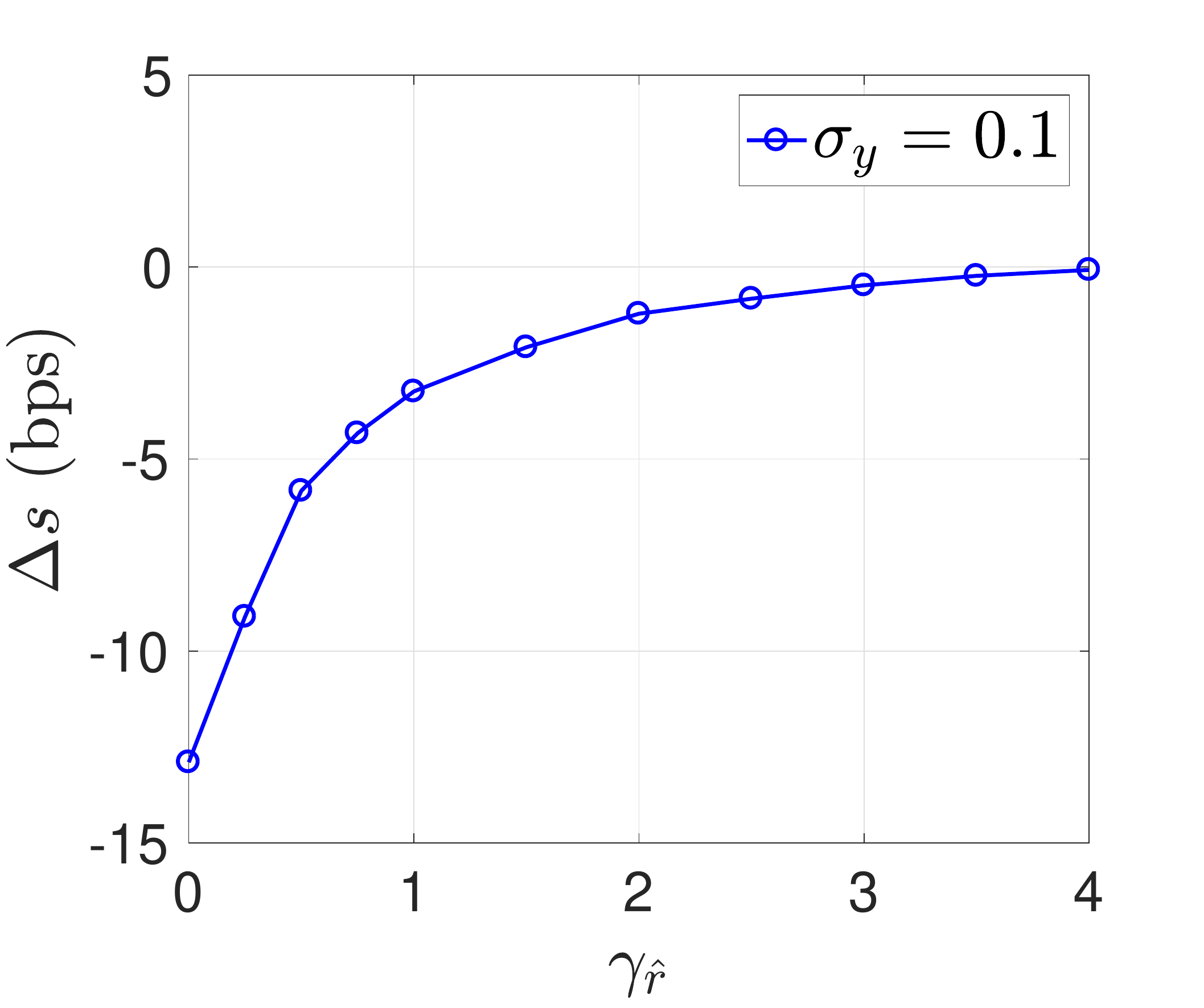}}
     \hspace{0.7cm}
     \subfigure{\includegraphics[width=0.4\textwidth]{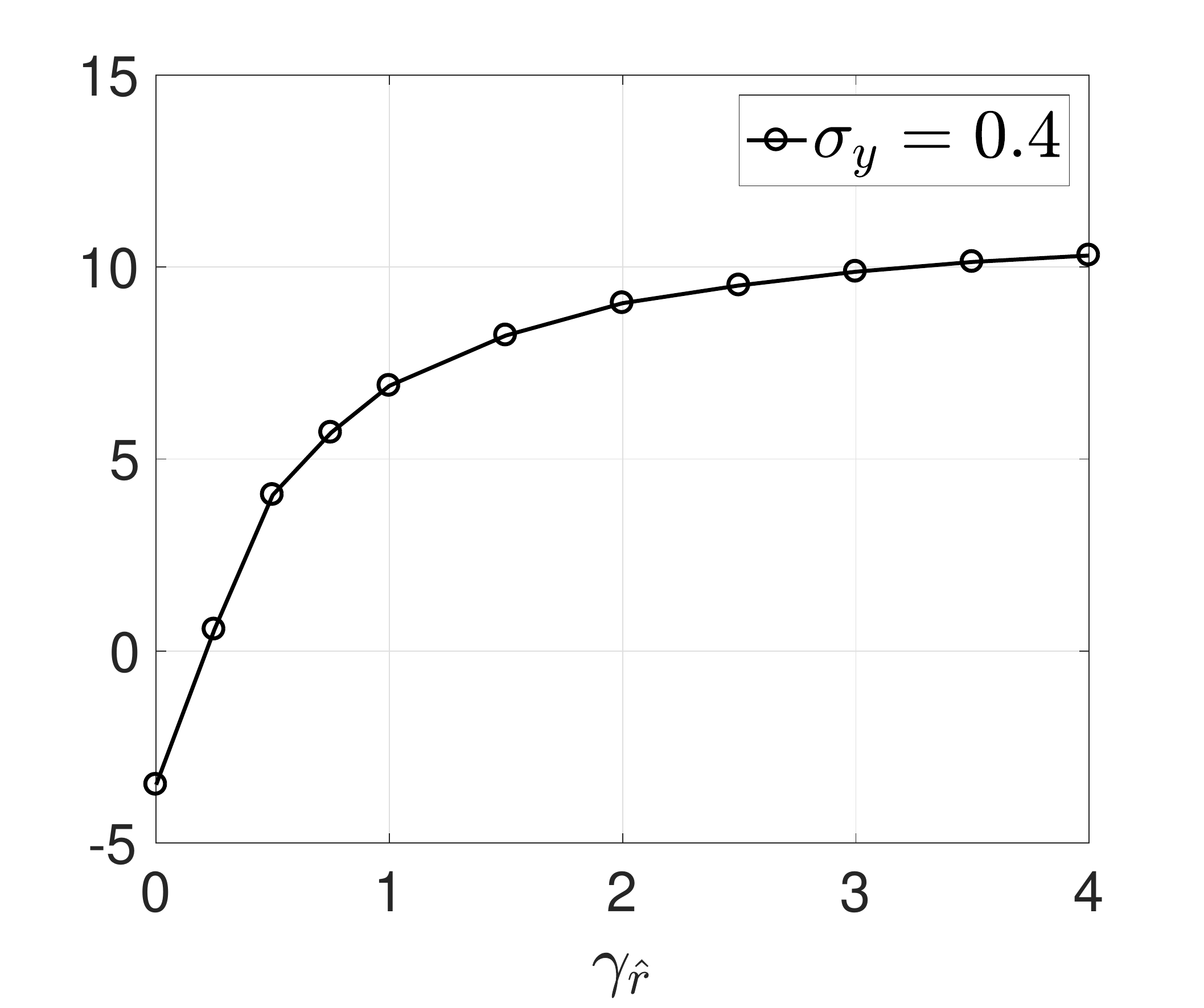}} \\
     \subfigure{\includegraphics[width=0.4\textwidth]{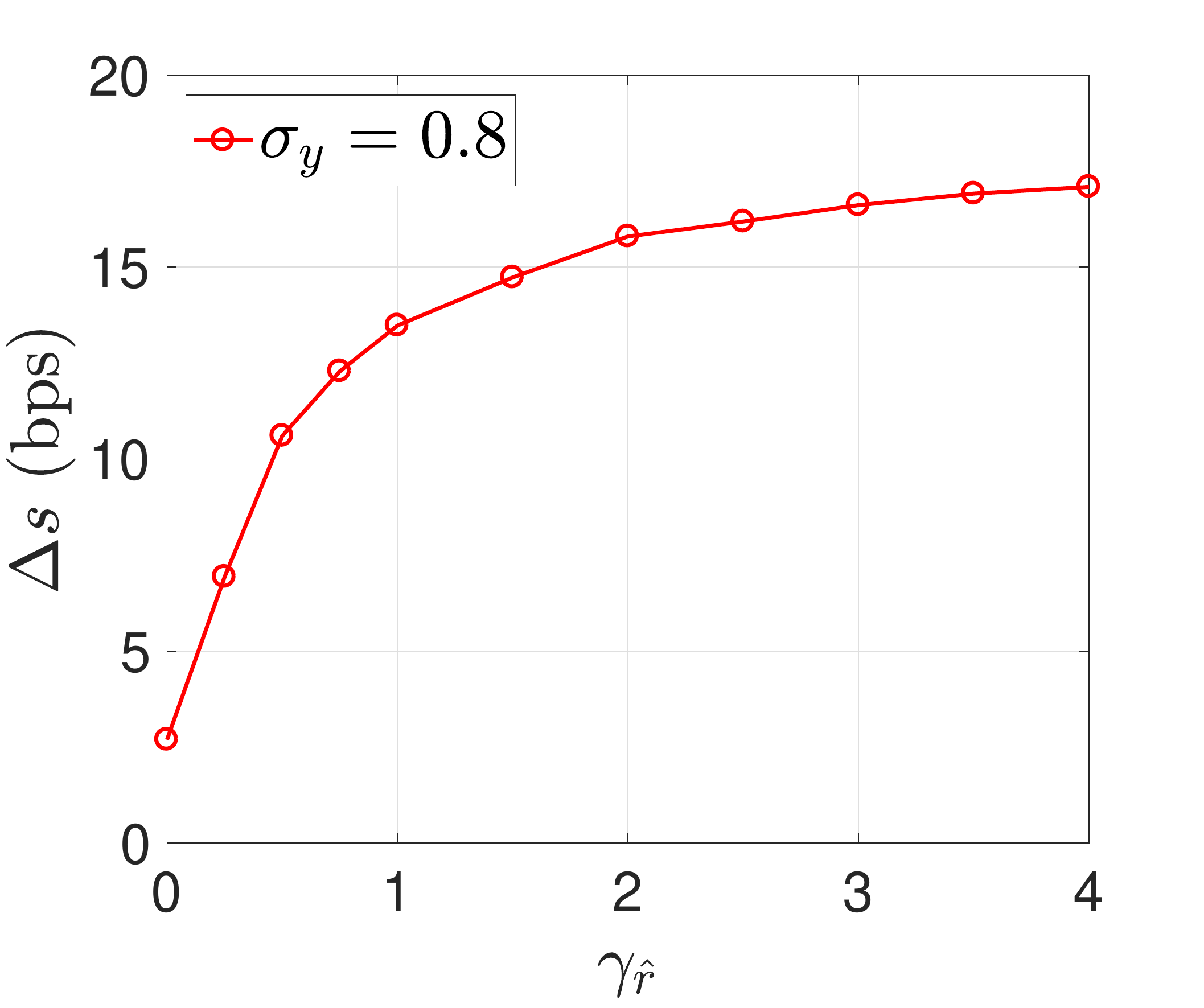}}
     \hspace{0.7cm}
     \subfigure{\includegraphics[width=0.4\textwidth]{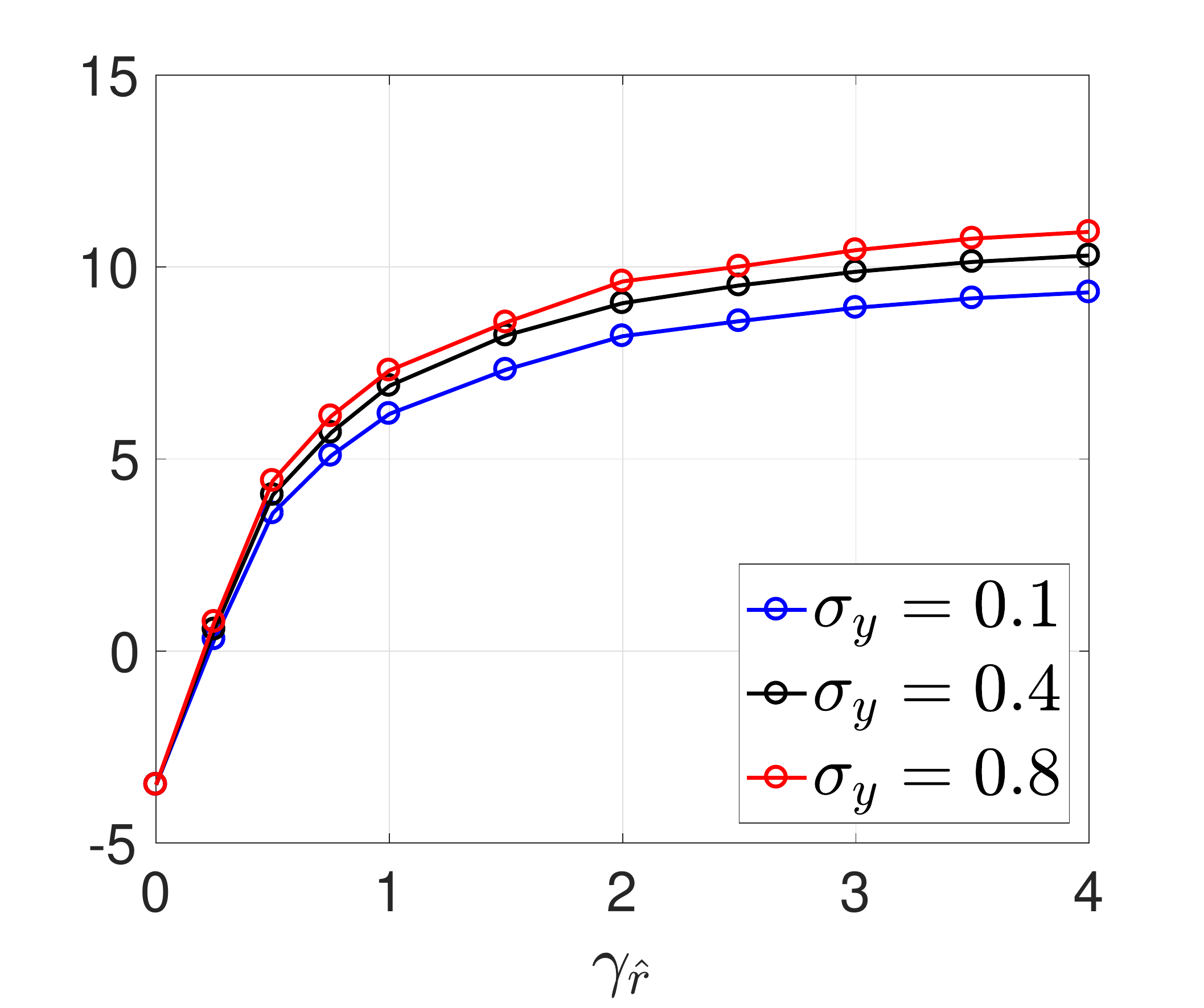}}
\caption{The influence of the hazard rate volatility $\sigma_y$ on the basis spread as a function of $\gamma_\hatr$ at $\gamma_z = 0$. Note, in the lower right panel the lines are shifted to start from the same point.}
\label{fig:GamRhazard}
\end{figure}

\begin{figure}[H]
\centering
     \subfigure{\includegraphics[width=0.4\textwidth]{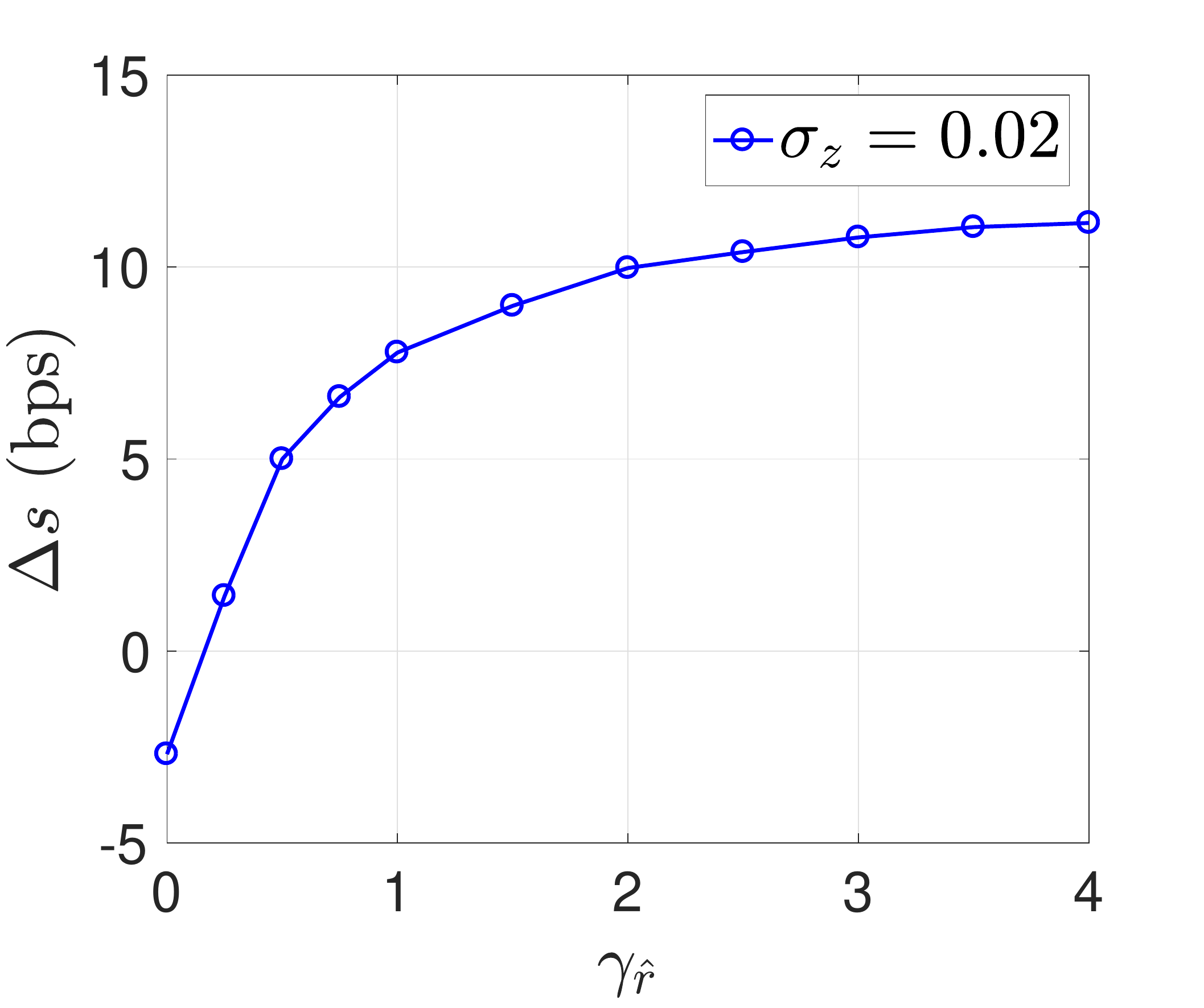}}
     \hspace{0.7cm}
     \subfigure{\includegraphics[width=0.4\textwidth]{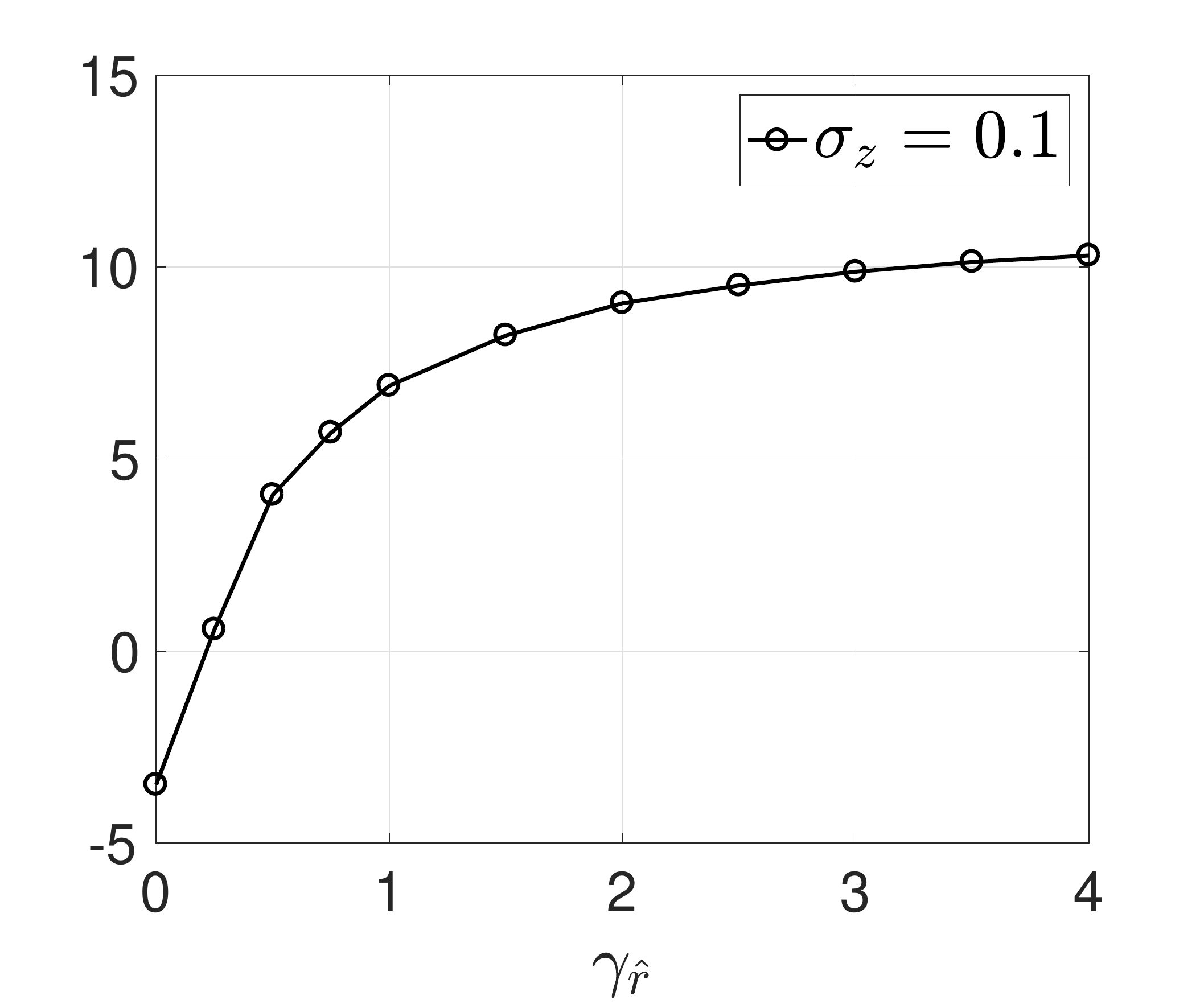}} \\
     \subfigure{\includegraphics[width=0.4\textwidth]{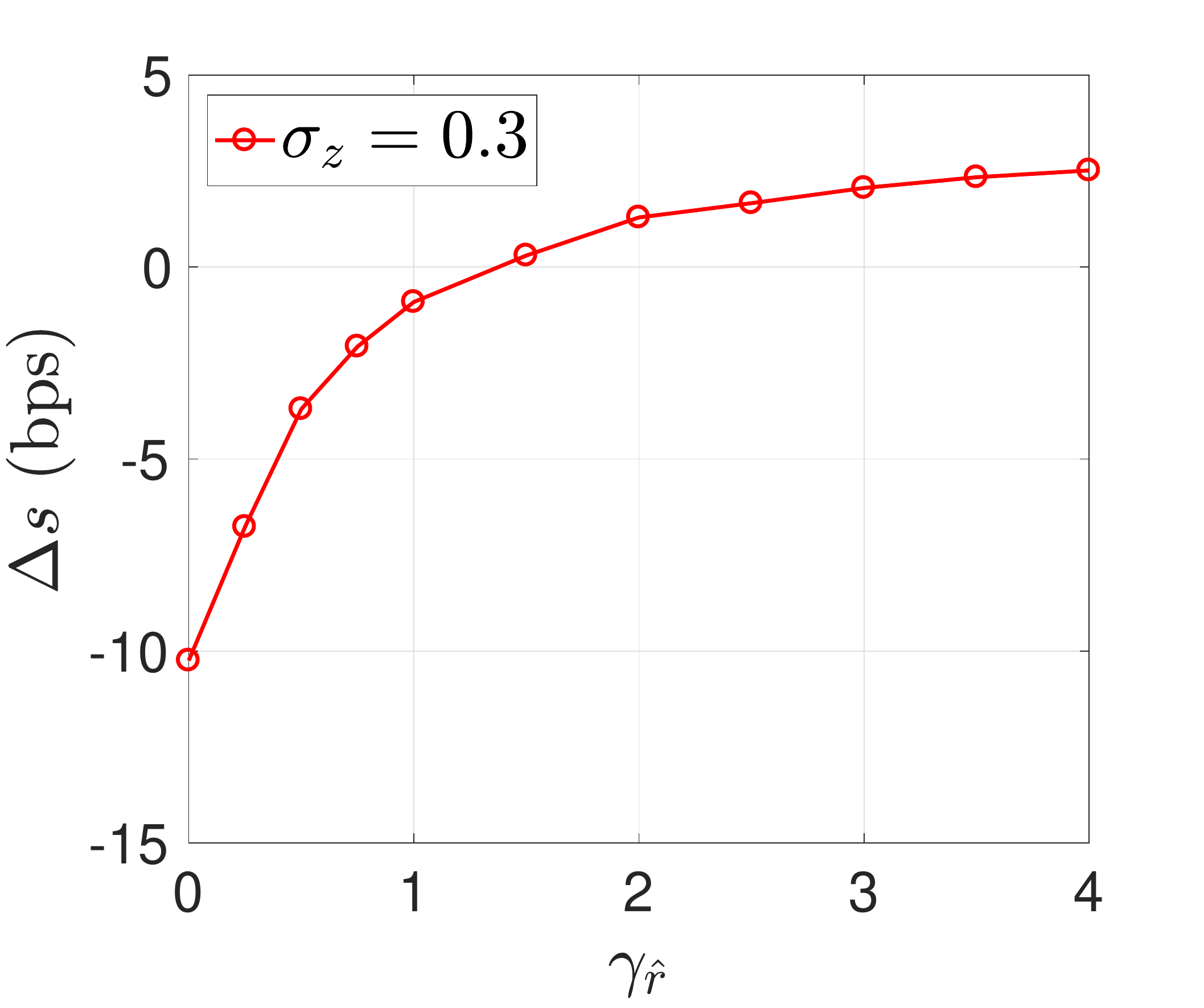}}
     \hspace{0.7cm}
     \subfigure{\includegraphics[width=0.4\textwidth]{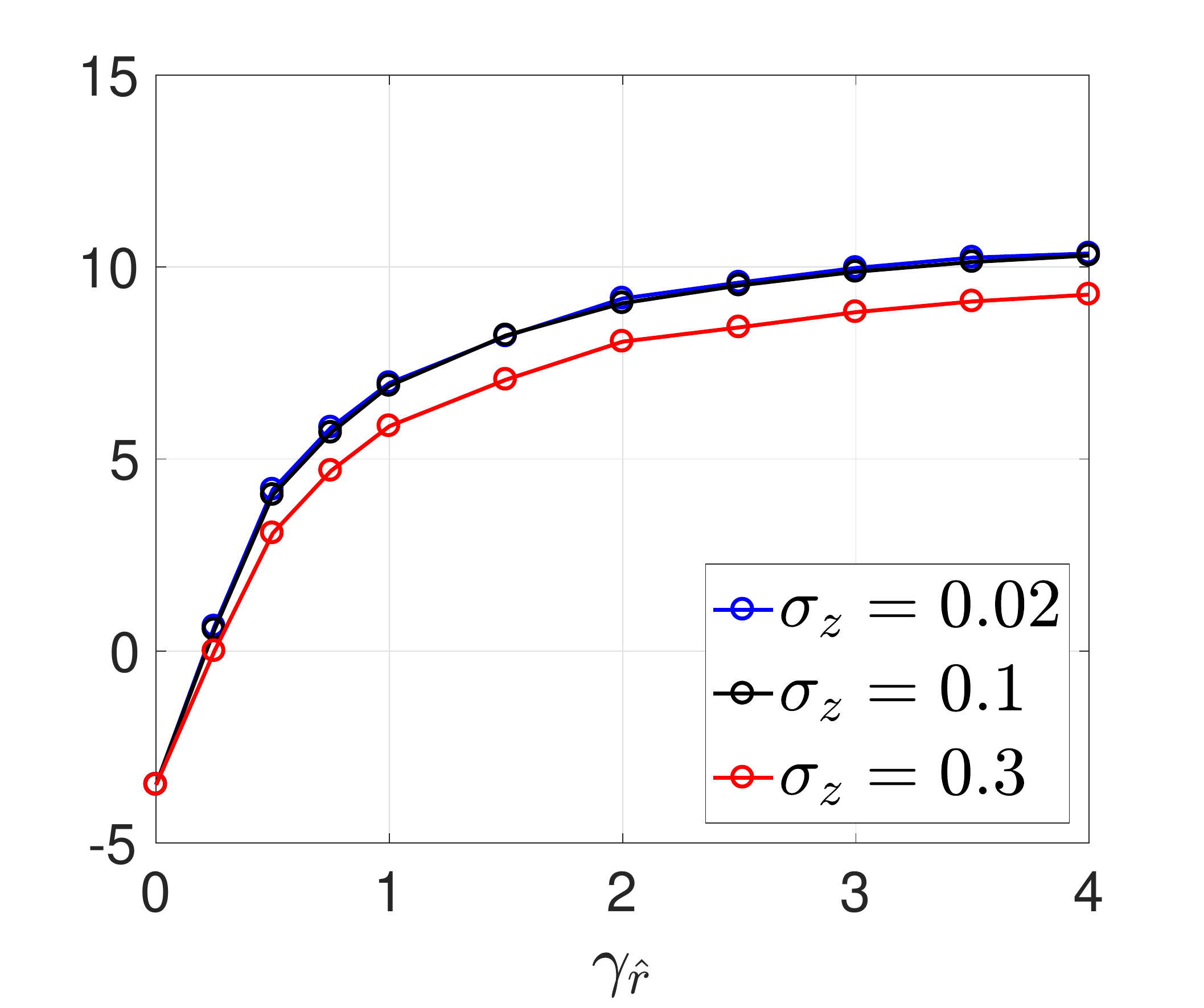}}
\caption{The influence of $\sigma_z$ on the basis spread as a function of $\gamma_\hatr$ at $\gamma_z = 0$. Note, in the lower right panel the lines are shifted to start from the same point.}
\label{fig:GamRfx}
\end{figure}

Thus, we observe that the jump-at-default in the FX rate is the most prominent factor that explains the largest portion of the known discrepancies between Quanto CDS quotes in US dollars and the foreign currency. Nevertheless, the potential jump in the foreign interest rate might be responsible for about 20 bps in the basis spread value. However, it is important to notice that the two jumps have opposite effects: the jump in the FX rate decreases the value of the foreign CDS, while the jump in the IR increases the value of the foreign CDS.

\section{Conclusion}\label{sec:Conclusion}

This paper introduces a new model which can be used, e.g., for pricing Quanto CDS.
The model operates with four stochastic factors, namely, hazard rate, foreign exchange rate, domestic interest rate, and foreign interest rate, and also allows for jumps-at-default in the FX and foreign interest rates. Corresponding systems of PDEs for both the risky bond price and the CDS price is derived similar to how this is done in \cite{BieleckiPDE2005}.

In order to solve these equations we develop a localized radial basis function method that is based on the partition of unity approach. The advantage of the method is that in our four-dimensional case it maintains high accuracy while uses less resources than, for example, corresponding finite difference or Monte Carlo methods. Potentially, the RBF method can be a subject of parallelization which would improve the computational efficiency.

The results of our numerical experiments presented in the paper qualitatively explain the discrepancies observed in the marked values of CDS spreads traded in domestic and foreign
economies and, accordingly, denominated in the domestic (USD) and foreign (euro, ruble, real, etc.) currencies. The Quanto effect (the difference between the prices of the same CDS contract traded in different economies, but represented in the same currency)
can, to a great extent, be explained by the devaluation of the foreign currency. This  would yield a much lower protection payout if converted to the US dollars. These results are similar to those obtained in \cite{Brigo}. We underline, however, that in \cite{Brigo} only constant foreign and domestic interest rates are considered, while in this paper they are stochastic even in the no-jumps framework.

In contrast to \cite{Brigo}, in this paper we also analyze the impact of the jump-at-default in the foreign interest rate which could occur simultaneously with the default in the FX rate. We found that this jump is a significant component of the process and is able to explain about 20 bps of the basis spread value. However, it is worth noticing that the jumps in the FX rate and IR have opposite effects. In other words,  devaluation of the foreign currency will decrease the value of the foreign CDS, while the increase of the foreign interest rate will increase the foreign CDS value.

The other important parameters of the model are correlations between the hazard rate and the factors that incorporate jumps, i.e., $\rho_{yz}$ and $\rho_{y\hatr}$, and volatilities of the hazard process $\sigma_y$ and the FX rate $\sigma_z$.
Therefore, they have to be properly calibrated. Varying the other correlations just slightly contributes to the basis spread value. Large values of the volatilities can in some cases explain up to 15 bps of the basis spread value.

We also have to mention that the pricing problem was formulated via backward PDEs. Therefore, computation of the CDS spread requires to independently solve these PDEs for every discrete time point on a temporal grid lying below the contract maturity.
This could be significantly improved if instead of the backward PDE we would work with the forward one for the corresponding density function. We leave this improvement to be implemented elsewhere.

\section*{Acknowledgments}
We thank Peter Carr and Damiano Brigo for their useful comments and discussions.
Victor Shcherbakov acknowledges the support from H F Sederholms stipendiestiftelse, Rektors resebidrag fr\r{a}n Wallenbergstiftelsen, and Anna Maria Lundins stipendiefond. Victor Shcherbakov also thanks the Department of Finance and Risk Engineering at Tandon School, NYU where he worked on this paper as a visiting scholar. We assume full responsibility for any remaining errors.

\clearpage
\section*{References}


\begin{thebibliography}{}

\bibitem[Ahlkrona and Shcherbakov, 2017]{Ahlkrona}
Ahlkrona, J. and Shcherbakov, V. (2017).
\newblock A meshfree approach to non-{N}ewtonian free surface ice flow:
  {A}pplication to the {H}aut {G}lacier d'{A}rolla.
\newblock {\em J. Comput. Phys.}, 330:633--649.

\bibitem[Augustin et~al., 2017]{ACS2017}
Augustin, P., Chernov, M., and Song, D. (2017).
\newblock {Sovereign credit risk and exchange rates: Evidence from CDS quanto
  spreads}.
\newblock Available at
  \url{https://sites.google.com/site/mbchernov/ACS_quanto_latest.pdf}.

\bibitem[Babu{\v{s}}ka and Melenk, 1997]{Melenk}
Babu{\v{s}}ka, I. and Melenk, J.~M. (1997).
\newblock The partition of unity method.
\newblock {\em Internat. J. Numer. Methods Engrg.}, 40(4):727--758.

\bibitem[Bielecki et~al., 2005]{BieleckiPDE2005}
Bielecki, T.~R., Jeanblanc, M., and Rutkowski, M. (2005).
\newblock Pde approach to valuation and hedging of credit derivatives.
\newblock {\em Quantitative Finance}, 5(3):257--270.

\bibitem[Bielecki and Rutkowski, 2004]{Bielecki2004}
Bielecki, T.~R. and Rutkowski, M.~R. (2004).
\newblock {\em Credit Risk: Modeling, Valuation and Hedging}.
\newblock Springer.

\bibitem[Borovkov et~al., 2003]{Borovkov2003}
Borovkov, K., Klebaner, F.~C., and Virag, E. (2003).
\newblock Random step functions model for interest rates.
\newblock {\em Finance and Stochastics}, 7(1):123--143.

\bibitem[Brigo, 2011]{BrigoSlide}
Brigo, D. (2011).
\newblock Arbitrage free credit valuation adjustments.
\newblock LGS on mathematical finance. Credit and counterparty risk models.
  Single name credit derivatives. www.damianobrigo.it.

\bibitem[Brigo and Morini, 2005]{BrigoMorini2005}
Brigo, D. and Morini, M. (2005).
\newblock {CDS Market Formulas and Models}.
\newblock Technical report, Banko IMI.

\bibitem[Brigo et~al., 2015]{Brigo}
Brigo, D., Pede, N., and Petrelli, A. (2015).
\newblock Multi currency credit default swaps quanto effects and {FX}
  devaluation jumps.
\newblock arXiv:1512.07256.

\bibitem[Catao and Mano, 2015]{Catao2015}
Catao, L. A.~V. and Mano, R. (2015).
\newblock How big is the sovereign default interest rate premium?
\newblock Technical report, World Economic Forum.

\bibitem[Cohen and Costanzino, 2017]{Cohen2017}
Cohen, A. and Costanzino, N. (2017).
\newblock Bond and {CDS} pricing via the stochastic recovery {B}lack--{C}ox
  model.
\newblock {\em Risks}, 5(2):26.

\bibitem[Cox et~al., 1985]{cir:85}
Cox, J.~C., Ingersoll, J.~E., and Ross, S.~R. (1985).
\newblock A theory of the term structure of interest rates.
\newblock {\em Econometrica}, 53(2):385--408.

\bibitem[Crosby, 2013]{Crosby2013}
Crosby, J. (2013).
\newblock Introduction to jump and {L\'evy} processes.
\newblock
  \url{http://www.john-crosby.co.uk/pdfs/JCrosby_OxfordJune2013_Levy.pdf}.

\bibitem[Duffie and Singleton, 1999]{DuffieSingleton99}
Duffie, D. and Singleton, K. (1999).
\newblock Modeling term structures of defaultable bonds.
\newblock {\em Review of Financial Studies}, 12(4):687--720.

\bibitem[Ehlers and Sch{\"o}nbucher, 2006]{ES2006}
Ehlers, P. and Sch{\"o}nbucher, P. (2006).
\newblock The influence of fx risk on credit spreads.
\newblock Technical report, ETH.

\bibitem[El-Mohammadi, 2009]{Mohammadi2006}
El-Mohammadi, R. (2009).
\newblock {BS with jump model and pricing of Quanto CDS with FX devaluation
  pricing of Quanto }.
\newblock {\em Munich Personal RePEc Archive}.

\bibitem[Endre and David, 2003]{BDFbook}
Endre, S. and David, M. (2003).
\newblock {\em An Introduction to Numerical Analysis}.
\newblock Cambridge University Press, ISBN 0-521-00794-1.

\bibitem[Ethier and Kurtz, 2009]{ethier2009markov}
Ethier, S. and Kurtz, T. (2009).
\newblock {\em Markov Processes: Characterization and Convergence}.
\newblock Wiley Series in Probability and Statistics. Wiley.

\bibitem[Fasshauer, 2007]{Fasshauer}
Fasshauer, G.~E. (2007).
\newblock {\em Meshfree approximation methods with {MATLAB}}, volume~6 of {\em
  Interdisciplinary Mathematical Sciences}.
\newblock World Scientific Publishing Co. Pte. Ltd., Hackensack, NJ.

\bibitem[Fasshauer et~al., 2004]{Fasshauer2}
Fasshauer, G.~E., Khaliq, A. Q.~M., and Voss, D.~A. (2004).
\newblock Using meshfree approximation for multi-asset {A}merican option
  problems.
\newblock {\em J. Chinese Inst. Engrs.}, 27(4):563--571.

\bibitem[Hampden-Turner and Goves, 2010]{Citi2010}
Hampden-Turner, M. and Goves, P. (2010).
\newblock {Credit derivatives: under the bonnet. A primer on CDS, indices and
  tranches}.
\newblock Technical report, Citibank.
\newblock Available at
  \url{https://finance.broad.msu.edu/files/2013/11/Citi-Credit-Derivatives-Primer.pdf}.

\bibitem[Hon and Mao, 1999]{YCHon3}
Hon, Y.~C. and Mao, X.~Z. (1999).
\newblock A radial basis function method for solving options pricing model.
\newblock {\em Financial Engineering}, 8(1):31--49.

\bibitem[Itkin, 2017]{ItkinBook}
Itkin, A. (2017).
\newblock {\em Pricing derivatives under {L}\'evy models}, volume~12 of {\em
  Pseudo-Differential Operators. Theory and Applications}.
\newblock Birkh\"auser/Springer, New York.

\bibitem[Jacod and Shiryaev, 1987]{JacodShiryaev:87}
Jacod, J. and Shiryaev, A. (1987).
\newblock {\em Limit Theorems for Stochastic Processes}.
\newblock Springer-Verlag, Berlin.

\bibitem[Jarrow et~al., 2003]{jarrow2003robust}
Jarrow, R., van Deventer, D.~R., and Wang, X. (2003).
\newblock A robust test of merton's structural model for credit risk.
\newblock {\em Journal of Risk}, 6:39--58.

\bibitem[Jarrow and Turnbull, 1995]{jarrow/turnbull:95}
Jarrow, R.~A. and Turnbull, S.~M. (1995).
\newblock Pricing derivatives on financial securities subject to credit risk.
\newblock {\em Journal of Finance}, 50(1):53--86.

\bibitem[Jeanblanc et~al., 2009]{Jeanblanc2009}
Jeanblanc, M., Yor, M., and Chesney, M. (2009).
\newblock {\em Mathematical methods for financial markets}.
\newblock Springer Finance. Springer-Verlag London, Ltd., London.

\bibitem[Katselas, 2010]{Katselas2010}
Katselas, G.~A. (2010).
\newblock {\em Pricing Defaultable Bonds under a Marked Point Process Model for
  Interest Rates}.
\newblock PhD thesis, The University of Melbourne.

\bibitem[Lipton and Savescu, 2014]{LiptonSavescu2014}
Lipton, A. and Savescu, I. (2014).
\newblock Pricing credit default swaps with bilateral value adjustments.
\newblock {\em Quantitative Finance}, 14(1):171--188.

\bibitem[Papapantoleon, 2008]{PAPAPANTOLEON2008}
Papapantoleon, A. (2008).
\newblock An introduction to {L\'e}vy processes with applications in finance.
\newblock Available at http://arxiv.org/abs/0804.0482.

\bibitem[Pettersson et~al., 2008]{Pettersson}
Pettersson, U., Larsson, E., Marcusson, G., and Persson, J. (2008).
\newblock Improved radial basis function methods for multi-dimensional option
  pricing.
\newblock {\em J. Comput. Appl. Math.}, 222(1):82--93.

\bibitem[Safdari-Vaighani et~al., 2015]{Safdari}
Safdari-Vaighani, A., Heryudono, A., and Larsson, E. (2015).
\newblock A radial basis function partition of unity collocation method for
  convection-diffusion equations.
\newblock {\em J. Sci. Comput.}, 64(2):341--367.

\bibitem[Schonbucher, 2003]{Schonbucher2003}
Schonbucher, P.~J. (2003.).
\newblock {\em Credit derivatives pricing models :}.
\newblock Wiley,, Chichester ;.

\bibitem[Shcherbakov and Larsson, 2016]{Shcherbakov}
Shcherbakov, V. and Larsson, E. (2016).
\newblock Radial basis function partition of unity methods for pricing vanilla
  basket options.
\newblock {\em Comput. Math. Appl.}, 71(1):185--200.

\bibitem[Shepard, 1968]{Shepard}
Shepard, D. (1968).
\newblock A two-dimensional interpolation function for irregularly-spaced data.
\newblock In {\em Proceedings of the 1968 23rd ACM National Conference}, ACM
  '68, pages 517--524, New York, NY, USA. ACM.

\bibitem[Simon, 2015]{Simon2015}
Simon, Z. (2015).
\newblock {Not Risk Free: The Relative Pricing of Euro Area Inflation-Indexed
  and Nominal Bonds}.
\newblock {\em Netspar Discussion Paper}, 11(074).

\bibitem[Thomson-Reuters, 2011]{IFR2011}
Thomson-Reuters (2011).
\newblock {IFR: Negative basis trades drive quanto CDS volumes}.
\newblock {\em International Financing Review}.

\bibitem[von Sydow et~al., 2015]{vonSydow}
von Sydow, L., H\"{o}\"{o}k, L.~J., Lindstr\"{o}m, E., Milovanovi\'{c}, S.,
  Persson, J., Shcherbakov, V., Shpolyanskiy, Y., Sir\'{e}n, S., Toivanen, J.,
  Wald\'{e}n, J., Wiktorsson, M., Levesley, J., Li, J., Oosterlee, C., Ruijter,
  M., Toropov, A., and Zhao, Y. (2015).
\newblock {BENCHOP} --- {T}he {BENCH}marking project in {O}ption {P}ricing.
\newblock {\em Int. J. Comput. Math.}, 92(12):2361--2379.

\bibitem[Wendland, 1995]{Wendland}
Wendland, H. (1995).
\newblock Piecewise polynomial, positive definite and compactly supported
  radial functions of minimal degree.
\newblock {\em Adv. Comput. Math.}, 4(4):389--396.

\bibitem[Wilmott, 1998]{WIlmott1998}
Wilmott, P. (1998).
\newblock {\em Derivatives, the theory and practice of financial engineering}.
\newblock Willey, New York.

\end{thebibliography}

\clearpage
\begin{appendices}
\renewcommand{\theequation}{\Alph{section}.\arabic{equation}}
\setcounter{equation}{0}
\renewcommand{\theequation}{A.\arabic{equation}}

\section{Derivation of main PDE  \label{apDeriv}}

Below we give a sketch of derivation of the main PDE for the defaultable zero-coupon bond price which follows from our model introduced in Section~\ref{modelJumps}, as the detailed derivation is rather long. Therefore, we utilize some results known in the literature, and explain only the main steps of the derivation.

According to our model setting which is presented in Section~\ref{modelJumps}, all underlying stochastic processes $R_t,\hatR_t, Y_t, Z_t, D_t$ possess a strong Markovian property, see, e.g., \cite{BieleckiPDE2005}. Denote by $r,\hatr, y, z$, and $d$ the initial values of these processes at time $t$, respectively. For Markovian underlyings it is well-known, e.g., \cite{ethier2009markov}, that evolution of $U_t$ represented as a function of variables $(t, r, \hatr, y, z, d)$ can be described by a corresponding PDE (or PIDE if jumps are also taken into account). In this section we derived such a PIDE in the explicit form.

Let us remind that in the jump-at-default framework the dynamics of $Z_t$ and $\hatR_t$ is given by \eqref{dzJump}, \eqref{rJump}
\begin{align} \label{jumpRZ}
dZ_t &=  (R_t - \hatR_t) Z_t dt + \sigma_z Z_t dW_t^{(3)} + \gamma_z Z_t d M_t, \\
d\hatR_t &= \hat a(\hat b-\hatR_t ) dt +  \sigma_{\hatr} \sqrt{\hatR_t}dW_t^{(2)} + \gamma_{\hatr} R_t d D_t. \nonumber
\end{align}
For the sake of convenience the second SDE could be re-written in the form of the first one
\begin{align} \label{hat_R}
d\hatR_t &= \hat a \left(\hat b-\hatR_t \right) dt + d \Gamma_t +  \sigma_{\hatr} \sqrt{\hatR_t}dW_t^{(2)} + \gamma_{\hatr} R_t d M_t \\
&= \hat a \left(\hat b - \hatR_t - \frac{\lambda_t}{\hat a} (1-D_t) \right) dt + d \Gamma_t +  \sigma_{\hatr} \sqrt{\hatR_t}dW_t^{(2)} + \gamma_{\hatr} R_t d M_t. \nonumber
\end{align}
So we replaced $D_t$ with a compensated martingale $M_t$ by subtracting a compensator of $D_t$, and, accordingly, added this compensator to the drift. When doing so, we take into account \eqref{hazard} to obtain $d \Gamma_t = (1 - D_t) \lambda_t dt$.

Below we need the following theorem from \cite{JacodShiryaev:87} (see also \cite{ItkinBook} and references therein), which provides a generalization of It\^{o}'s lemma to the class of semimartingales
\begin{theorem} \label{itoLevy}
Let $X = (X_t)_{0\le t \le T}$ be a \LY process which is a real-valued semimartingale
with the triplet $(b, c, \nu)$, and $f$ be a function on $\mathbb{R}$, $f \in C^2$. Then, $f(X)$ is a semimartingale, and $\forall t \in [0,T]$ the following representation holds
\begin{align} \label{itoLevyForm}
f(X_t) &= f(X_0) + \int_0^t f'(X_{s^-}) d X_s + \dfrac{1}{2} \int_0^t f''(X_{s^-}) d\langle X^c\rangle_s \\
&+ \sum_{0 \le s \le t} \left[ f(X_{s}) - f(X_{s^-}) - f'(X_{s^-}) \Delta X_s \right]. \nonumber
\end{align}
Here $X_{s^-} = \lim_{u \nearrow s}$ is the value just before a potential jump,
$\Delta X_s = X_s - X_{s^-}$, $X^c$ is the continuous martingale part of $X_t$, i.e. $X^c_t = \sqrt{c}W_t$, and $\langle \cdot \rangle$ determines a quadratic variation.

Alternatively, if the random measure of jumps $\mu^X(ds,dx)$ is used, we have
\begin{align} \label{JumpInt}
f(X_t) &= f(X_0) + \int_0^t f'(X_{s^-}) d X_{s^-} + \dfrac{1}{2} \int_0^t f''(X_{s^-}) d\langle X^c\rangle_{s^-} \\
&+ \int^t_0 \int_{\mathbb{R}} \left[ f(X_{s^-} + x) - f(X_{s^-}) - x f'(X_{s^-})\right] \mu^X(ds,dx). \nonumber
\end{align}
\end{theorem}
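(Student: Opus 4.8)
The plan is to reduce the general $C^2$ case to the polynomials, on which the formula drops out of the elementary semimartingale calculus, and then to pass to the limit by localization plus uniform approximation. The tools are the canonical decomposition of $X$ into a continuous martingale part $X^c$, a finite-variation part and a (compensated) jump part, the quadratic variation $[X]_t = \langle X^c\rangle_t + \sum_{0\le s\le t}(\Delta X_s)^2$, and the product (integration-by-parts) rule $d(Y_sZ_s) = Y_{s^-}\,dZ_s + Z_{s^-}\,dY_s + d[Y,Z]_s$ for semimartingales $Y,Z$. Note first that the two forms \eqref{itoLevyForm} and \eqref{JumpInt} are equivalent once one knows $\sum_{0\le s\le t}(\Delta X_s)^2 < \infty$ a.s.: the pathwise jump sum is then an absolutely convergent integral of $f(X_{s^-}+x)-f(X_{s^-})-xf'(X_{s^-})$ against the jump measure $\mu^X$, and the only remaining point is to reconcile the small-jump compensator with the drift in the triplet $(b,c,\nu)$. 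I therefore focus on \eqref{itoLevyForm}.

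\emph{Step 1: monomials.} For $f(x)=x$ the statement is the tautology $X_t = X_0 + \int_0^t dX_s$. For $f(x)=x^2$, the product rule with $Y=Z=X$ gives $X_t^2 = X_0^2 + 2\int_0^t X_{s^-}\,dX_s + [X]_t$; inserting $[X]_t = \langle X^c\rangle_t + \sum_{0\le s\le t}(\Delta X_s)^2$ and using $X_s^2 - X_{s^-}^2 - 2X_{s^-}\Delta X_s = (\Delta X_s)^2$ gives precisely \eqref{itoLevyForm}. An induction on the degree, passing from $x^k$ to $x^{k+1}$ by the product rule and collecting the jump remainders via the binomial expansion of $(X_{s^-}+\Delta X_s)^{k+1}$, extends this to all monomials, hence to all polynomials by linearity.

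\emph{Step 2: localization and approximation.} Fix $f\in C^2$. Localizing along the exit times of $X$ from large balls, we may assume $X$ stays in a fixed compact $K$ on $[0,t]$. Applying Weierstrass' theorem to $f''$ and integrating twice, choose polynomials $p_m$ with $p_m\to f$, $p_m'\to f'$, $p_m''\to f''$ uniformly on a neighbourhood of $K$. Apply the identity of Step 1 to each $p_m$ and let $m\to\infty$: the $dX$ term converges by the dominated-convergence theorem for stochastic integrals (integrands $p_m'(X_{s^-})$ uniformly bounded, converging pointwise to $f'(X_{s^-})$), the $d\langle X^c\rangle$ and finite-variation terms converge by ordinary dominated convergence, and the jump sum converges because the second-order Taylor bound $|p_m(X_s)-p_m(X_{s^-})-p_m'(X_{s^-})\Delta X_s| \le \frac12\big(\sup_K|p_m''|\big)(\Delta X_s)^2$ furnishes a summable, $m$-uniform majorant thanks to $\sum_{0\le s\le t}(\Delta X_s)^2<\infty$. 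Removing the localization gives \eqref{itoLevyForm} in general; splitting the jump integral at $|x|=1$, treating $\{|x|>1\}$ pathwise (finite activity) and compensating the small jumps against the drift then yields \eqref{JumpInt}.

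\emph{Main obstacle.} The genuine work is in the infinite-activity regime: justifying the interchange of $m\to\infty$ with the possibly infinite jump sum in Step 2, and, in the last step, identifying the pathwise sum of compensated small jumps with the corresponding $\mu^X$-integral while correctly absorbing the small-jump compensator into the drift of the triplet. Everything hinges on the single quantitative input $\mathbb{E}\big[\sum_{0\le s\le t}(\Delta X_s)^2\big] < \infty$ --- equivalently $\int_{\mathbb{R}}(x^2\wedge 1)\,\nu(dx) < \infty$ in the L\'evy case --- together with the uniform Taylor remainder estimate; given these, the rest is bookkeeping. For the processes used in this paper, which jump only at the single default time $\tau$, the jump sum has at most one term and this difficulty is absent, but the general statement as quoted from \cite{JacodShiryaev:87} requires the full argument sketched above.
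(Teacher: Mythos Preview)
Your sketch is correct and follows the standard route to It\^o's formula for semimartingales: prove it for polynomials by iterated integration by parts, then localize and pass to the $C^2$ limit via Weierstrass approximation, controlling the jump sum through the a.s.\ finiteness of $\sum_{s\le t}(\Delta X_s)^2$. One small quibble: the quantitative input you need in Step~2 is the almost-sure bound $\sum_{0\le s\le t}(\Delta X_s)^2<\infty$ (which holds for every semimartingale since $[X]_t<\infty$), not the $L^1$ bound $\mathbb{E}\big[\sum(\Delta X_s)^2\big]<\infty$; the latter need not hold before localization, but after stopping at the exit time from a compact set it is harmless.

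That said, the paper does not prove this theorem at all. Its entire proof is the single line ``See Theorem~I.4.57 in \cite{JacodShiryaev:87}.'' The statement is quoted as a known tool from the literature, not established within the paper. So there is no ``paper's own proof'' to compare against beyond the citation; your argument is essentially the one that Jacod--Shiryaev (or Protter, or Dellacherie--Meyer) carry out in full detail, and it is entirely appropriate. Your closing remark that the application in the paper involves at most one jump, so none of the infinite-activity subtleties arise, is also well taken.
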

\begin{proof}
 See Theorem I.4.57 in \cite{JacodShiryaev:87}.
\end{proof}

Further let us consider only jumps of a finite variation and finite activity, so
\[
\sum_{0 \le s \le t} f(X_{s}) < \infty, \qquad
\sum_{0 \le s \le t} f'(X_{s^-}) \Delta X_s < \infty. \]

Our model allows only for a single jump to occur at the default time $\tau$. Therefore,
\begin{equation} \label{measureMu}
\mu^X(ds dx) = \delta(s-t) D_s \nu(dx) ds , \nonumber
\end{equation}
\noindent with $\nu(dx)$ being a \LY measure of jumps in $\mathbb{R}$, and where $\delta(x)$ is the Dirac delta function.

Respectively, in the differential form and for a multidimensional case \eqref{JumpInt}  reads
\begin{align} \label{JumpIntMult}
d f(\Xs) &= \fp{f(\Xsm)}{\Xsm} * d \Xs + \dfrac{1}{2} \sop{f(\Xsm)}{\Xsm} * d\langle \mathbf{X}^c\rangle_s \\
&+ \int_{\mathbb{R}} \left[ f(\Xsm + \mathbf{x}) - f(\Xsm) - \mathbf{x} * \fp{f(\Xsm)}{\Xsm} \right] \nu(d\mathbf{x}) d D_t, \nonumber
\end{align}
\noindent where $\Xs$ is a vector of independent variables, $\mathbf{x}$ is a vector of the corresponding jump values, and $\left<*\right>$ is an inner product.

Also, according to \eqref{jumpRZ} the size of the jump in both the foreign interest rate and the FX rate is proportional to the value of the corresponding process right before the jump occurs at time $\tau$ with a constant rate $\gamma$.

Combining \eqref{measureMu} and \eqref{jumpRZ} gives rise to the \LY measure $\nu(d \mathbf{x})$ of this multi-dimensional jump process to be
\begin{align} \label{levyM}
\nu(d \mathbf{x}) &= \delta(x_z - \gamma_z z)
\delta(x_\hatr - \gamma_\hatr \hatr) d x_z d x_\hatr,
\end{align}
\noindent (compare, e.g., with \cite{Crosby2013}).

Therefore, the last line in \eqref{JumpIntMult} changes to
\begin{align} \label{jump2d}
J &= \left[f(t, \Xs) - f(t, \Xsm) - \Delta \Xsm*\fp{ f(t, \Xsm)} {\Xsm }\right] d D_t, \\
\Xs &= \Xsm + \Delta \Xsm = f(t, r, \hatr(1+\gamma_\hatr), y, z(1+\gamma_z), d=1), \nonumber \\
\Xsm &= f(t, r, \hatr, y, z, d=0). \nonumber
\end{align}
Having all these results, the PDE for the discounted defaultable bond price can be derived by using a standard technique for jump-diffusion processes, see, e.g., \cite{PAPAPANTOLEON2008}. However, for the sake of brevity, we will utilize the approach of \cite{BieleckiPDE2005}, where a similar problem is considered, and, hence, making a reference to the corresponding theorems proved in that paper.

Note, that
\begin{align} \label{dDer}
\mathbb{E}_t[d D_t | D_t & d, Y_t = y] = d \mathbb{E}_t[D_t | D_t = d, Y_t = y] &=  \lambda_t \m1_{t \le \tau} \Big|_{(D_t = d, Y_t = y)} dt = (1 - d) e^y dt,
\end{align}
\noindent where the last by one equality follows from Lemma 7.4.1.3 in \cite{Jeanblanc2009}.

Using \eqref{dDer}, it can be seen that after  the default occurs, $D_t = 1_{\tau \le t} = 1$, and thus the jump term $J$ disappears. However, before the default at time $t < \tau$ the jump term is
\begin{equation} \label{Jfinal}
J = f(t, \Xs) - f(t, \Xsm) - \Delta \Xsm * \fp{ f(t, \Xsm)} {\Xsm }.
\end{equation}

So, conditional on the value of $D_t$ the solution could be represented in the form
\begin{equation}
f(t, \Xs) = \m1_{t < \tau} f(t, \Xsm) + \m1_{\tau \le t} f(t, \Xs).
\end{equation}
Then the remaining derivation of the PDE could be done based on the following Proposition:
\begin{proposition}[Proposition~3.1 in \cite{BieleckiPDE2005}] \label{Prop}
Let the price processes $Y^i, \ i=1,2,3$ satisfy
\[ d Y^i_t = Y^i_{t^-} \left[ \mu_i dt + \sigma_i d W^i_t + k_i d M_t\right] \]
\noindent with $k_i > -1$ for $i = 1,2,3$, $\mu, \sigma$ being the corresponding drifts and volatilities. Then the arbitrage price of a contingent claim $Y$ with the
terminal payoff $G(t, Y^1_T, Y^2_T, Y^3_T, D_T)$ equals
\[  \pi_t(Y) = \m1_{t \le \tau} C(t, Y^1_t, Y^2_t, Y^3_t, 0) + \m1_{t \ge \tau} C(t, Y^1_t, Y^2_t, Y^3_t, 1) \]
\noindent for some function $C: [0,T] \times  \mathbb{R}^3_+ \times {0,1} \to \mathbb{R}$. Assume that for $d = 0$ and $d=1$ the auxiliary function
$C(\cdot,d): [0,T] \times  \mathbb{R}^3_+ \to \mathbb{R}$ belongs to the class
$C^{1,2}([0,T] \times  \mathbb{R}^3_+)$. Then the functions $C(\cdot,0)$ and $C(\cdot,1)$
solve the following PDEs:
\begin{align*}
\p_t C(\cdot,0) &+ \sum_{i=1}^3 (\alpha - \lambda k_i)y_i \p_i C(\cdot,0) +
\frac{1}{2} \sum_{i,j=1}^3 \rho_{ij} \sigma_i \sigma_j y_i y_j \p_{ij} C(\cdot,0) \\
&+ \lambda \left[ C(t, y_1(1+k_1), y_2(1+k_2), y_3(1+k_3), 1)  - C(t, y_1, y_2, y_3, 0)\right] - \alpha C(\cdot,0) = 0, \\
\p_t C(\cdot,1) &+ \sum_{i=1}^3 \alpha y_i \p_i C(\cdot,1) +
\frac{1}{2} \sum_{i,j=1}^3 \rho_{ij} \sigma_i \sigma_j y_i y_j \p_{ij} C(\cdot,1)
- \alpha C(\cdot,1) = 0.
\end{align*}
\noindent subject to the terminal conditions
\begin{align*}
C(T, y_1, y_2, y_3, 0) &= G((T, y_1, y_2, y_3, 0), \\
C(T, y_1, y_2, y_3, 1) &= G((T, y_1, y_2, y_3, 1). \\
\end{align*}

\end{proposition}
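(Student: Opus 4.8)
\emph{Proof strategy.} The plan is to obtain the functional representation of $\pi_t(Y)$ from risk-neutral valuation together with the Markov structure, and then to read off the two PDEs by applying the semimartingale It\^{o} formula of Theorem~\ref{itoLevy}. Working under the domestic spot martingale measure $\QM$, the claim price is $\pi_t(Y)=\mathbb{E}_t\big[e^{-\int_t^T\alpha\,ds}\,G(T,Y^1_T,Y^2_T,Y^3_T,D_T)\big]$, and since each $k_i\,dM_t$ is already a martingale increment, requiring the discounted tradeables $e^{-\int_0^t\alpha\,ds}Y^i_t$ to be $\QM$-martingales pins the drift $\mu_i=\alpha$, which is why $\alpha$ replaces $\mu_i$ in the PDEs. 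Because $D$ is a two-state process absorbed at $1$ and $(Y^1,Y^2,Y^3,D)$ is strongly Markov for the enlarged filtration $\calF_t\vee\mathcal{H}_t$, conditioning separately on $\{t<\tau\}$ and on $\{t\ge\tau\}$ produces exactly $\pi_t(Y)=\m1_{t\le\tau}C(t,Y^1_t,Y^2_t,Y^3_t,0)+\m1_{t\ge\tau}C(t,Y^1_t,Y^2_t,Y^3_t,1)$ for two conditional value functions; evaluating at $t=T$ and using $\pi_T(Y)=G(T,\ldots,D_T)$ gives the stated terminal conditions $C(T,\cdot,d)=G(T,\cdot,d)$.

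For the post-default branch, once $\tau\le t$ one has $D_s\equiv1$ and $\Gamma_s$ stops increasing for $s\ge t$, so $M_s-M_t=0$ and the jump term in each $dY^i$ vanishes; the surviving dynamics is the pure diffusion $dY^i_s=Y^i_s(\alpha\,ds+\sigma_i\,dW^i_s)$. Applying the classical It\^{o}/Feynman--Kac argument to $e^{-\int_0^s\alpha\,du}C(s,Y^1_s,Y^2_s,Y^3_s,1)$ and setting its finite-variation part to zero (which must hold because the discounted price is a $\QM$-martingale) yields the autonomous linear parabolic PDE for $C(\cdot,1)$ with terminal value $G(T,\cdot,1)$.

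For the pre-default branch I would rewrite $k_i\,dM_t=k_i\,dD_t-k_i(1-D_t)\lambda_t\,dt$, so that on $\{t<\tau\}$ the dynamics reads $dY^i_t=Y^i_{t^-}\big[(\alpha-\lambda_t k_i)\,dt+\sigma_i\,dW^i_t+k_i\,dD_t\big]$, and then apply the differential semimartingale It\^{o} formula \eqref{JumpIntMult} to $e^{-\int_0^t\alpha\,ds}C(t,Y^1_t,Y^2_t,Y^3_t,D_t)$. The outcome splits into (i) a continuous-martingale part $\sum_i\sigma_i y_i\,\p_i C(\cdot,0)\,dW^i_t$; (ii) a jump-martingale part carried by $dM_t$ whose jump amplitude is $C$ at the post-jump state $(y_1(1+k_1),y_2(1+k_2),y_3(1+k_3),1)$ minus $C$ at the pre-jump state $(y_1,y_2,y_3,0)$; and (iii) a finite-variation part. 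Part (iii) collects $\p_t C(\cdot,0)$, the transport terms $\sum_i(\alpha-\lambda k_i)y_i\,\p_i C(\cdot,0)$, the diffusion terms $\frac12\sum_{i,j}\rho_{ij}\sigma_i\sigma_j y_iy_j\,\p_{ij}C(\cdot,0)$, the discount term $-\alpha C(\cdot,0)$, and, from the predictable compensator of the default jump computed as in \eqref{dDer} using \eqref{hazard} and Lemma~7.4.1.3 of \cite{Jeanblanc2009}, the coupling term $\lambda\big[C(t,y_1(1+k_1),y_2(1+k_2),y_3(1+k_3),1)-C(t,y_1,y_2,y_3,0)\big]$. Forcing part (iii) to vanish is precisely the first PDE.

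The main obstacle is not the formal It\^{o} computation but the measure-theoretic bookkeeping around it: one must justify rigorously that $\pi_t$ is a deterministic function of $(t,Y^1_t,Y^2_t,Y^3_t,D_t)$ (the strong Markov property of the vector process and the Cox/hazard-process structure of the filtration enter here), identify precisely the predictable compensator of the default jump of $C$, and verify that the local martingale produced by It\^{o}'s formula is a genuine martingale (integrability and localization under the assumed $C^{1,2}$ regularity and suitable growth). A secondary structural point worth stating explicitly is that the system is only one-way coupled — $C(\cdot,1)$ enters the equation for $C(\cdot,0)$ but not conversely — so it is solved hierarchically: first the autonomous PDE for $C(\cdot,1)$, then the PDE for $C(\cdot,0)$ with the now-known inhomogeneous term $\lambda\,C(\cdot,1)$. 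For these technical steps I would follow the arguments of \cite{BieleckiPDE2005}.
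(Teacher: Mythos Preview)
Your sketch is correct and follows exactly the line of argument one would expect: Markov representation, semimartingale It\^{o} formula on the discounted price, separation into pre- and post-default regimes, identification of the compensator term, and the observation that the system is only one-way coupled. The paper itself does not supply an independent proof of this proposition; it simply records the statement and refers the reader to \cite{BieleckiPDE2005}, so your proposal in fact goes beyond what the paper provides while remaining fully consistent with the surrounding derivation in the appendix.
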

\begin{proof}
See \cite{BieleckiPDE2005}.
\end{proof}
Two important notes should be made in order to apply this proposition to our problem.

\paragraph{Tradable assets} In \cite{BieleckiPDE2005} all underlying assets are assumed to be tradable. Therefore, they have to be martingales under some unique martingale measure (a particular choice of $Y^1$ as is made to be a numeraire). To achieve this, additional conditions on the drifts, volatilities and the jump rates $k_i$ should be imposed. In particular, this requires that the coefficient $\alpha$ in Proposition~\ref{Prop} would be
\[ \alpha = \mu_i + \sigma_i \frac{c}{a}, \]
\noindent where the determinants $c,a$ are the explicit functions of $\mu_i, \sigma_i, k_i, \ i=1,2,3$ and given in \cite{BieleckiPDE2005}. Moreover, it is shown that the right-hand side of this formula does not depend on $i$.

However, for our problem among all the underlying processes the only tradable one is that for the FX rate. This allows one to fully eliminate these conditions on $\mu_i, \sigma_i, k_i, \ i=1,2,3$. As a consequence, e.g., the term
\[ \sum_{i=1}^3 (\alpha - \lambda k_i)y_i \p_i C(\cdot,0) \]
\noindent in the Proposition~\ref{Prop} is now replaced with
\[ \sum_{i=1}^3 (\mu_i - \lambda k_i)y_i \p_i C(\cdot,0). \]

\paragraph{Risk-neutrality} Proposition~\ref{Prop} derives an arbitrage price (under real measure) of the contingent claim written on the given underlyings. To get this price under a risk-neutral measure $\QM$, one needs to construct a replication ({\it self-financing}) strategy of a generic claim. In particular, to hedge out the risk of $\hat R_t$ and $R_t$, corresponding non-defaultable zero-coupon bonds (perhaps, of a longer maturity) should be used as a hedge, \cite{Bielecki2004, WIlmott1998}.

This problem is solved by Proposition~3.3 of \cite{BieleckiPDE2005}. Accordingly, the previously derived PDEs remain the same, with the only change of the killing term where the coefficient $\alpha$ is replaced with the interest $r$ corresponding to measure $\QM$ (as expected based on a general theory of asset pricing).

We proceed by combining these results together and applying them to our model. First, we revert the notation back to that used in this paper. Then, taking into account an explicit form of the stochastic differential equations describing the dynamics of our underlying processes, and conditioning on $R_t = r, \hatR_t = \hatr, Z_t = z, Y_t = y, D_t = d$, we obtain that under the risk-neutral measure $\QM$ the price $U_t(T)$ is
\begin{equation} \label{bondPriceA}
U_t(T, r, \hatr, y, z) = \m1_{t < \tau} f(t, T, r,\hatr, y, z, 0) +
\m1_{t \ge \tau} f(t, T, r,\hatr, y, z, 1).
\end{equation}

Here the function $f(t, T, r,\hatr, y, z, 1) \equiv u(t, T, X), \ X = \{t,r,\hatr, y, z\}$ solves the PDE
\begin{equation} \label{PDE1A}
\fp{u(t,T,X)}{t} + {\cal L} u(t,T,X) - r u(t,T,X) = 0,
\end{equation}
\noindent where the diffusion operator $\cal L$ reads
\begin{align} \label{LdiffA}
\cal L &= \frac{1}{2}\sigma{_r}^2 r\sop{}{r} + \frac{1}{2} \sigma_{\hatr}^2 \hatr \sop{u}{\hatr} + \frac{1}{2}\sigma_z^2 z^2 \sop{}{z} + \frac{1}{2}\sigma_y^2\sop{}{y}
+ \rho_{r \hatr} \sigma_r \sigma_{\hatr} \sqrt{r \hatr}\cp{}{r}{\hatr}  \\
&+ \rho_{rz}\sigma_r \sigma_z z\sqrt{r} \cp{}{r}{z}
+ \rho_{\hatr z} \sigma_{\hatr} \sigma_z z \sqrt{\hatr} \cp{}{z}{\hatr}
+ \rho_{ry}\sigma_r \sigma_y \sqrt{r} \cp{}{r}{y}
+ \rho_{\hatr y} \sigma_{\hatr} \sigma_y \sqrt{\hatr} \cp{}{y}{\hatr}
\nonumber \\
&+ \rho_{yz} \sigma_y \sigma_z z \cp{}{y}{z}
+ a(b-r)\fp{}{r}
+ \hat a(\hat b - \hatr) \fp{}{\hatr}
+ (r - \hatr) z \fp{}{z}
+ \kappa(\theta - y) \p{}{y}. \nonumber
\end{align}

The second function $f(t, T, r,\hatr, y, z, 0) \equiv v(t, T, X)$ solves the PDE
\begin{align} \label{PDE2A}
\fp{v(t,T,X)}{t} &+ {\cal L} v(t,T,X) - r v(t,T,X)
- \lambda \gamma_z z \fp{v(t,T,X)}{z} \\
&+ \lambda \left[ u(t, T, X^+) -
v(t, T, X) \right] = 0, \qquad X^+ = \{r, \hatr(1+\gamma_\hatr), y, z(1+\gamma_z)\}, \nonumber
\end{align}
\noindent where according to \eqref{lambda}, $\lambda = e^y$. Note, that the term
$\lambda \gamma_\hatr \hatr v_\hatr(t,T,X)$ in the drift of \eqref{hat_R} cancels out with the corresponding compensator in \eqref{Jfinal} as it should be as the process $\hatR_t$ is not a martingale.

\end{appendices}

\end{document}